\documentclass[10pt,article]{IEEEtran}

\usepackage{epsfig,latexsym,amsmath,amssymb,setspace,cite,color,graphicx,algorithm,algorithmic,cases, subcaption}

\usepackage{physics}
\newcommand\numberthis{\addtocounter{equation}{1}\tag{\theequation}}
\makeatletter

\newcommand{\pushright}[1]{\ifmeasuring@#1\else\omit\hfill$\displaystyle#1$\fi\ignorespaces}
\makeatother

\usepackage{amsmath}
\usepackage{amsfonts}
\usepackage{amssymb}
\usepackage{dsfont}
\usepackage{bm}
\usepackage{amsthm}
\usepackage{newlfont}
\usepackage{float}
\usepackage{hyperref}
\usepackage{algorithm}
\usepackage{algorithmic}
\usepackage{enumerate}
\usepackage{chngcntr}
\usepackage{mathtools}
\usepackage{breqn}
\usepackage{stmaryrd}
\usepackage{cite}
\hypersetup{
    colorlinks=true,  
    linkcolor= blue,  
    citecolor=blue,  
        urlcolor=black  
}

\title{Private Sum Computation: Trade-Offs between Communication, Randomness, and Privacy}

\begin{document}
\sloppy
\allowdisplaybreaks[1]

\newtheorem{thm}{Theorem} 
\newtheorem{lem}{Lemma}
\newtheorem{prop}{Proposition}
\newtheorem{property}{Property}
\newtheorem{cor}{Corollary}
\newtheorem{defn}{Definition}
\newcommand{\remarkend}{\IEEEQEDopen}
\newtheorem{remark}{Remark}
\newtheorem{rem}{Remark}
\newtheorem{ex}{Example}
\newtheorem{pro}{Property}

\renewcommand{\qedsymbol}{ \begin{tiny}$\blacksquare$ \end{tiny} }

\renewcommand{\algorithmicrequire}{\textbf{Input:}}
\renewcommand{\algorithmicensure}{\textbf{Inputs:}}

\renewcommand{\leq}{\leqslant}
\renewcommand{\geq}{\geqslant}

\author{R\'emi A. Chou,  J\"org Kliewer, Aylin Yener
\thanks{R\'{e}mi A. Chou is with the Department of Computer Science and Engineering, The University of Texas at Arlington, Arlington, TX 76019. J. Kliewer is with the Department of Electrical and Computer
Engineering, New Jersey Institute of Technology, Newark, NJ. Aylin Yener is with the Department of Electrical and Computer Engineering, The Ohio State University, Columbus, OH 43210. E-mails: remi.chou@uta.edu, jkliewer@njit.edu, yener@ece.osu.edu. A preliminary version of this work was presented at the 2024 IEEE International Symposium on Information Theory (ISIT) in \cite{chou2024}. This work was supported in part by NSF grants CCF-2201824 and CCF-2425371, and in part by Chope Chair funds.}}

\IEEEoverridecommandlockouts

\maketitle
\begin{abstract}

Consider multiple users and a fusion center. Each user possesses a  sequence of bits and can communicate with the fusion center through a one-way public channel. The fusion center's task is to compute the sum of all the sequences under the privacy requirement that a set of colluding users, along with the fusion center, cannot gain more than a predetermined amount $\delta$ of information, measured through mutual information, about the sequences of other users. 
Our first contribution is to characterize the minimum amount of necessary communication between the users and the fusion center, as well as the minimum amount of necessary  randomness at the users. Our  second contribution is to establish a connection between private sum computation and secret sharing by showing that secret sharing is necessary to generate the local randomness needed for private sum computation, and prove that it holds true for any $\delta \geq 0$.
\end{abstract}

\section{Introduction}

The study of distributed summation under security constraints is closely related to the problem of secure aggregation, as evidenced by prior works including \cite{bonawitz2017practical, bell2020secure, so2021turbo, kadhe2020fastsecagg, zhao2022information, so2022lightsecagg, jahani2023swiftagg+, wan2022information, schlegel2023codedpaddedfl, liu2022efficient}. This research area finds applications in distributed computing. Notably, secure summation with zero information leakage has been investigated extensively,  with optimal communication and randomness rates analyzed in \cite{chor1993communication, hayashi2018secure, zhao2021expand, wan2022secure, zhao2022information}.

In this paper, we expand the scope of secure sum computation to allow a controlled amount of information leakage and establish the trade-offs between information leakage, communication rates, and required randomness. In particular, we aim to quantify the communication and randomness gains achievable at the cost of a controlled amount of privacy leakage, which is particularly relevant in settings where communication bandwidth is limited or the generation of randomness is a costly resource to implement. More specifically, we consider a scenario with $L \geq 2$ users, each possessing a private sequence, and a fusion center tasked with computing the sum of these sequences from an encoded version of each sequence sent by each user.  To this end, the users utilize shared randomness, referred to as global randomness in this paper, whose rate must be minimized. From this global randomness is extracted local randomness, which each user uses to encode their sequence before sending it to the fusion center.
The privacy requirement is to ensure that any colluding set of users of size $T$, where $T \leq L-2$ is fixed, along with the fusion center, must not learn more than a predetermined amount $\delta$ of information about the other users' sequences. This information leakage is measured through the conditional mutual information between  all the encoded sequences and all the non-encoded sequences given all the knowledge available at the colluding users and the fusion center. Furthermore, inspired by ramp secret sharing, e.g.,~\cite{blakley1984security, yamamoto1986secret}, we  introduce a private summation setting where the controlled amount of information leakage treats all users symmetrically, ensuring that no user has privileged access to others’ private data and that privacy guarantees are consistent across all users and collusion scenarios.

Our first contribution is deriving capacity results on the amount of communication and randomness needed. Specifically, we establish converse results on the necessary communication rate of individual users to the fusion center and the required rate of shared randomness among users. These results are primarily established through combinatorial arguments. We also derive converse results for the individual rates of local randomness needed by each user when a symmetry assumption holds on the information leakage. Additionally, we provide an achievability scheme that simultaneously matches all the aforementioned converse bounds and thus establishes a capacity result. Our second contribution establishes a fundamental connection between secret sharing and private summation. Specifically, we demonstrate that achieving optimal communication and randomness rates requires that the local randomness of each user corresponds to the shares in a ramp secret sharing scheme. We prove that this condition is both necessary and sufficient for optimality.

\emph{Related work}: The closest related work to this paper  is \cite{zhao2022secure}, which addresses a special case of our setting where (i) no information leakage is allowed, i.e., $\delta =0$, (ii) the individual communication rates of the users are all equal, and (iii) the individual local randomness rates of the users are all equal. The capacity region for this special case has been established in \cite[Theorem 1]{zhao2022secure}. When $\delta=0$, meaning no information leakage is allowed, our results extend \cite[Theorem 1]{zhao2022secure} by providing a capacity result for the communication rates of individual users,  the  local randomness rates of individual users, and the local randomness sum rate  of all the users, without assuming that (ii) and (iii) hold. Additionally, our results extend \cite[Theorem 1]{zhao2022secure} to the case $\delta>0$, and establish capacity results for the communication rates of individual users, the local randomness sum rate  of all the users, and the global randomness rate,  without assuming that (ii) and (iii) hold. Our results also establish a capacity result for the  local randomness rates of individual users when $\delta>0$ under some information leakage symmetry condition.  The achievability part of \cite[Theorem 1]{zhao2022secure} also establishes, when $\delta =0$,  that secret sharing can be employed to generate the local randomness at the users for secure summation. In this study, we establish a stronger connection between secure summation and secret sharing by showing that secret sharing is not only sufficient but also \emph{necessary} to generate the local randomness needed for private summation to achieve optimality. Additionally, beyond the case $\delta=0$, we prove that it holds true for any $\delta \neq 0$.

We note that other studies have explored secure computation with interactive communication from arbitrarily correlated randomness in various settings, e.g., \cite{tyagi2013distributed, data2020interactive, data2016communication, tyagi2011function}. Additionally,  computation from correlated randomness that allows information leakage  is also studied in \cite{tu2019function, chou2022function}. The main difference between the above studies and this work is that the structure of the shared randomness available at the users is part of the coding scheme design in our setting, similar to \cite{zhao2022secure}, whereas in  \cite{tyagi2013distributed, data2020interactive, data2016communication, tyagi2011function,tu2019function, chou2022function} the statistics of the shared randomness available at the users are predetermined and not subject to~design.

The remainder of the paper is organized as follows. The problem statement is presented in Section \ref{secps}. Our main results are summarized in Section \ref{secmr}. The proofs of our converse results are presented in Sections \ref{seccv} and \ref{secthi}. The necessity of secret sharing to generate local randomness is presented in Section~\ref{appproofth4}. Finally, concluding remarks are presented in Section \ref{sec:cl}.

\section{Problem Statement} \label{secps}
Notation: Let $\mathbb{N}$, $\mathbb{R}$, and $\mathbb{Q}$ be the sets of natural, real, and rational numbers, respectively. For $a,b \in \mathbb{R}$, define $\llbracket a,b \rrbracket \triangleq [\lfloor a \rfloor , \lceil b \rceil ] \cap \mathbb{N}$, and $[a] \triangleq \llbracket 1, a\rrbracket$.  The indicator function is denoted by $\mathds{1}\{ \omega \}$, which is equal to~$1$ if the predicate $\omega$ is true and $0$ otherwise.

Consider a fusion center and $L$ users who have individual private sequences. The users  communicate with the fusion center over a one-way, public, and noiseless channel, with the aim  that the fusion center computes the sum of their sequences as described next.

\begin{defn} \label{def1}
An $(L,n,(R^{(X)}_l)_{l\in [L]},R^{(U)},(R^{(K)}_l)_{l\in [L]})$ private-sum computation protocol consists of 
\begin{itemize}
    \item $L$ users indexed in the set $[L]$;
    \item $L$ independent sequences $(S_l)_{l \in [L]}$, where Sequence $S_l$ is owned by User $l\in [L]$ and is uniformly distributed over the finite field $\mathbb{F}_2^n$;
    \item A source of global randomness, independent of the sequences $(S_l)_{l \in [L]}$, and described by $U$, uniformly distributed over $\mathbb{F}_2^{nR^{(U)}}$;
    \item $L$ encoding functions $e_l : \mathbb{F}_2^{nR^{(U)}} \to \mathbb{F}_2^{nR^{(K)}_l} $;
    \item $L$ encoding functions $e^{(X)}_l : \mathbb{F}_2^{nR^{(K)}_l} \times \mathbb{F}_2^{n} \to \mathbb{F}_2^{nR^{(X)}_l}$;
    \item A decoding function $d: \bigtimes_{l\in [L]} \mathbb{F}_2^{nR^{(X)}_l} \to \mathbb{F}_2^n$;
\end{itemize}
and operates as follows:
\begin{enumerate}
    \item User $l\in[L]$ receives the local randomness $K_l \triangleq e_l (U)$;
    \item User $l\in[L]$ encodes the sequence $S_l$ as $X_l \triangleq e^{(X)}_l(K_l,S_l)$ and sends $X_l$ over the public channel;
    \item The fusion center computes $\hat{\Sigma}_{[L]} \triangleq d(X_{[L]})$ an estimate of $\Sigma_{[L]} \triangleq \sum_{l\in[L]} S_l$, where   $X_{[L]} \triangleq (X_l)_{l\in [L]}$.
\end{enumerate}
\end{defn}
Then, we define the desired requirements for an $(L,n,(R^{(X)}_l)_{l\in [L]},R^{(U)},(R^{(K)}_l)_{l\in [L]})$ private-sum computation protocol as in Definition \ref{def1} as follows.
\begin{defn} \label{def2}
Let $\delta \geq 0$ and $T \in \llbracket 0,L-2 \rrbracket$. An $(L,n,(R^{(X)}_l)_{l\in [L]},R^{(U)},(R^{(K)}_l)_{l\in [L]})$ private-sum computation protocol is $(\delta,T)$-private if
    \begin{align}
   \hat{\Sigma}_{[L]}  & =  \Sigma_{[L]} , \label{eqrec} \\
      \max_{\substack{\mathcal{T} \subset [L]\\:  |\mathcal{T}| \leq T}}  \frac{I(S_{[L]}; X_{[L]} |\Sigma_{[L]}, S_{\mathcal{T}} ,K_{\mathcal{T}})}{n} &\leq   \delta,\label{eqprivact}
\end{align}
    where for any $\mathcal{T} \subset [L]$,  $S_{\mathcal{T}} \triangleq (S_l)_{l\in \mathcal{T}}$ and $K_{\mathcal{T}} \triangleq (K_l)_{l\in \mathcal{T}}$.
\end{defn}
 Equation \eqref{eqrec} means that the fusion center computes the sum $\Sigma_{[L]}$ without errors.  If the fusion center and a set $\mathcal{T}$ of $T$ users collude,  then, Equation \eqref{eqprivact} quantifies the normalized amount of information that $X_{\mathcal{T}^c}$ leaks about  $S_{\mathcal{T}^c}$ given the knowledge of $(\Sigma_{[L]}, S_{\mathcal{T}}, K_{\mathcal{T}})$, indeed, note that $I(S_{[L]}; X_{[L]} |\Sigma_{[L]}, S_{\mathcal{T}} ,K_{\mathcal{T}}) = I(S_{\mathcal{T}^c}; X_{\mathcal{T}^c} |\Sigma_{[L]}, S_{\mathcal{T}}, K_{\mathcal{T}})$ by Definition~\ref{def1}. The setting is depicted in Figure~\ref{fig}.
\begin{figure}[ht]
\centering
 \includegraphics[width=7.5cm]{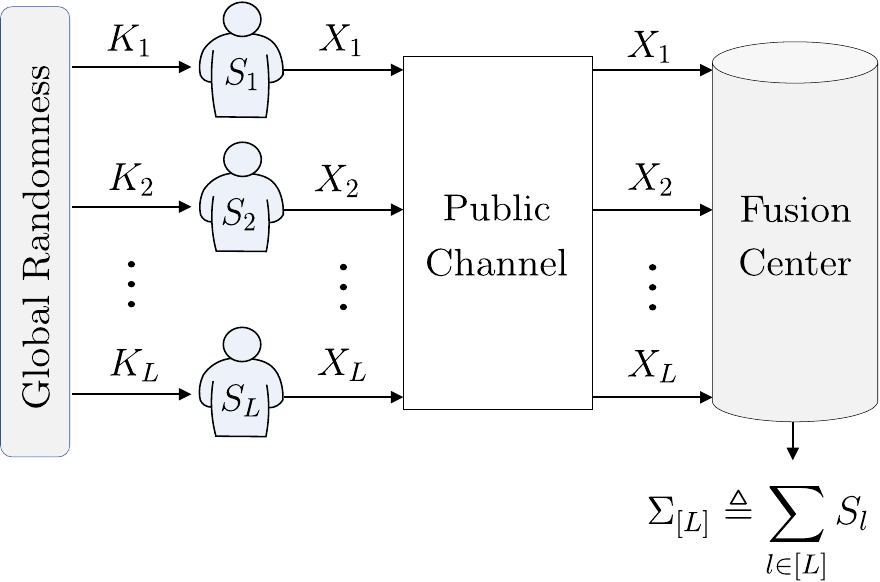}
 \caption{Sum computation setting: $(S_i)_{i\in[L]}$ and $(X_i)_{i\in[L]}$  represent the the private sequences at the users and their encoded versions, respectively. $(K_i)_{i\in[L]}$   represents  local randomness.}
   \label{fig}
\end{figure}

In this study, for protocols that satisfy the requirements of Definition \ref{def2}, we are interested in characterizing (i) the minimum amount of global randomness $U$ needed, (ii) the minimum amount of public communication needed for each user, (iii) the minimum amount of local randomness needed for the users. To this end, we introduce the following quantities.
\begin{defn}
    Let $\delta \geq 0$ and $T \in \llbracket 0,K-2\rrbracket$. Let $\mathcal{C}$ be the set of tuples $\Lambda \triangleq (R^{(X)}_l)_{l\in [L]},R^{(U)},(R^{(K)}_l)_{l\in [L]})$ such that there exist $(L,n,(R^{(X)}_l)_{l\in [L]},R^{(U)},(R^{(K)}_l)_{l\in [L]})$ private-sum computation protocols that are $(\delta,T)$-private. Then, define
    \begin{align*}
        R_{l,\star}^{(X)} &\triangleq \inf_{\Lambda \in \mathcal{C}}  R_{l}^{(X)} , \forall l\in[L],\\
        R^{(K)}_{l,\star} &\triangleq \inf_{\Lambda \in \mathcal{C}}  R^{(K)}_{l} , \forall l\in[L],\\
    R^{(K)}_{\Sigma,\star} &\triangleq \inf_{\Lambda \in \mathcal{C}} \sum_{l \in [L]} R^{(K)}_l, \\
            R_{\star}^{(U)} &\triangleq \inf_{\Lambda \in \mathcal{C}}  R^{(U)}.    \end{align*}
\end{defn}

\begin{ex}[Setting in \cite{zhao2022secure}]\label{ex1}
Consider the special case $\delta = 0$.  In that setting, the following achievability scheme scheme is described in \cite{zhao2022secure}.  The global randomness is $U \triangleq (N_l)_{l\in[L-1]}$, where the $N_l$'s are independent and uniformly distributed over $\mathbb{F}_2^n$. $U$ is then encoded as 
\begin{align*}
    K_l &= e_l (U) \triangleq N_l, \text{ for } l\in [L-1], \\
    K_L&= e_L (U) \triangleq - \sum_{l\in[L-1]} N_l.
\end{align*}
After receiving $X_l$, $l\in [L] $, User $l$ encodes the sequence $S_l$ as 
\begin{align*}
X_l = e_l^{(X)}(K_l,S_l) \triangleq K_l + S_l, \text{ for } l\in [L].
\end{align*}
 Finally, the fusion center recovers $\Sigma_{[L]}$ as \begin{align*} d(X_{[L]}) \triangleq \sum_{l\in[L]} X_l = \sum_{l\in[L]} S_l = \Sigma_{[L]},\end{align*} and the privacy constraint~\eqref{eqprivact} is satisfied by the one-time pad lemma, e.g., \cite[Th. 2.4]{stinson2005cryptography}. Hence, for $l\in [L]$, the rates $R_l^{(X)}=1$, $R_l^{(K)}=1$, and  $R^{(U)}=L-1$ are achieved.

Additionally, \cite{zhao2022secure} proves the optimality of this achievability scheme under the assumptions that (i) the local randomness rates are identical for all users (i.e.,  $\exists C_1 \in \mathbb{R}, R^{(K)}_l = C_1 , \forall l \in [L]$), and (ii) the communication rates are identical for all users (i.e.,  $\exists C_2 \in \mathbb{R}, R^{(X)}_l = C_2 , \forall l \in [L]$). 
\end{ex}

\begin{rem}
In Example \ref{ex1}, the generation of local randomness from global randomness corresponds to a secret sharing scheme, as detailed in Section \ref{secsc}. Hence, Example \ref{ex1} demonstrates the sufficiency of secret sharing for creating local randomness from global randomness. Furthermore, in Section \ref{secsc}, we will prove that secret sharing is not only sufficient but also necessary for this purpose.
\end{rem}

\begin{rem}
   When $\delta = 0$, our model recovers the setting presented in \cite{zhao2022secure}, as described in Example \ref{ex1}. However, unlike \cite{zhao2022secure}, for $\delta = 0$, our setting does not rely on assumptions (i) and (ii), described in Example \ref{ex1}, to establish the converse results.
\end{rem}

\begin{rem}
In Definition \ref{def2}, it is sufficient to consider $T \in \llbracket0,L-2\rrbracket$. If $T=L$, then \eqref{eqprivact} is always satisfied as the left-hand side is equal to zero. This is also true if $T=L-1$, as $S_{[L]}$ can be reconstructed from $(\Sigma_{[L]} ,S_{\mathcal{T}})$ for any $\mathcal{T} \subset [L]$ such that  $|\mathcal{T}| = L-1$.
\end{rem}

\section{Main results} \label{secmr}
We first show in Section \ref{secpre} that Definition \ref{def2} can be simplified without loss of generality. We then present our converse and capacity results in Sections \ref{secconverse} and \ref{secca}, respectively. Finally, we establish the connection between secret sharing and required randomness in private summation in Section \ref{secsc}.
\subsection{Preliminaries} \label{secpre}
As proved in Appendix \ref{app1}, in the privacy constraint \eqref{eqprivact} of Definition \ref{def2}, it is sufficient to consider $\delta$ of the form $\delta =  \alpha (L-1)$ with $\alpha \in [0,1]$, i.e., \eqref{eqprivact} can be replaced by
    \begin{align}
         \max_{\substack{\mathcal{T} \subset [L]\\:  |\mathcal{T}| \leq T}}  I(S_{[L]}; X_{[L]} |\Sigma_{[L]} ,S_{\mathcal{T}} ,K_{\mathcal{T}}) & \leq  n \alpha (L-1).\label{eqprivact2}
\end{align}

\subsection{Converse results} \label{secconverse}
We first derive a converse on the individual communication rates, the local randomness sum-rate, and the global randomness rate.
\begin{thm}[Converse]\label{thmcv}
Consider any $(L,n, (R^{(X)}_l)_{l\in [L]},$ $R^{(U)},(R^{(K)}_l)_{l\in [L]})$ private-sum computation protocol that is $(\delta= \alpha (L-1),T)$-private, where $\alpha \in [0,1]$. Then, we have
\begin{align}
      R^{(X)}_l & \geq 1,  \forall l \in [L], \label{eqth21} \\
    \textstyle\sum_{l \in [L]} R^{(K)}_l & \geq (1-\alpha) L,\label{eqth22}\\
    R^{(U)} & \geq (1-\alpha) (L-1).\label{eqth23}
\end{align}
\end{thm}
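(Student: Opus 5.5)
The plan is to prove the three inequalities separately, using only the recovery condition \eqref{eqrec}, the privacy condition \eqref{eqprivact2} with $\mathcal{T}=\emptyset$ (always admissible since $T\geq 0$), and the Markov/functional structure of Definition~\ref{def1}. That structure says: $X_l$ is a function of $(K_l,S_l)$, $K_{[L]}$ is a function of $U$, and $U$ is independent of $S_{[L]}$.

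For \eqref{eqth21}, fix $l\in[L]$. We lower bound $H(X_l)$ by a conditional mutual information with genie side information:
\begin{align*}
H(X_l) &\geq I(X_l;S_l\,|\,S_{[L]\setminus\{l\}},K_{[L]\setminus\{l\}})\\
&= H(S_l) - H(S_l\,|\,X_l,S_{[L]\setminus\{l\}},K_{[L]\setminus\{l\}}).
\end{align*}
The equality uses that $S_l$ is independent of $(S_{[L]\setminus\{l\}},K_{[L]\setminus\{l\}})$. The side information determines $X_{[L]\setminus\{l\}}$, so together with $X_l$ the fusion center's decoder yields $\Sigma_{[L]}$ by \eqref{eqrec}. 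Hence $S_l=\Sigma_{[L]}-\sum_{j\neq l}S_j$ is determined, the last conditional entropy vanishes, and $nR^{(X)}_l\geq H(X_l)\geq n$. Privacy is not needed here.

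For \eqref{eqth23}, the chain is
\begin{align*}
nR^{(U)}\geq H(U)\geq H(K_{[L]})\geq H(X_{[L]}|S_{[L]}).
\end{align*}
Using $H(X_{[L]}|S_{[L]})=H(X_{[L]}|S_{[L]},\Sigma_{[L]})$ and \eqref{eqprivact2} with $\mathcal{T}=\emptyset$,
\begin{align*}
H(X_{[L]}|S_{[L]}) = H(X_{[L]}|\Sigma_{[L]}) - I(S_{[L]};X_{[L]}|\Sigma_{[L]}) \geq H(X_{[L]}|\Sigma_{[L]}) - n\alpha(L-1).
\end{align*}
It therefore remains to show $H(X_{[L]}|\Sigma_{[L]})\geq n(L-1)$. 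I would do this by a chain rule over users $1,\dots,L-1$. Each term $H(X_l|X_{<l},\Sigma_{[L]})$ is lower bounded by conditioning further on a genie set consisting of $(S_j,K_j)$ for all $j\notin\{l,L\}$ and $K_L$. The step-\eqref{eqth21} argument then applies: given this genie set and $\Sigma_{[L]}$, the pair $(S_l,S_L)$ is a uniform point on a line, $X_L$ is a function of $S_L$, $X_l$ is a function of $S_l$, and decodability forces $X_l$ to carry $n$ bits about $S_l$. Making this conditioning consistent with the chain rule, i.e., checking that $X_{<l}$ is indeed a function of the genie set, is the step I expect to be the main technical obstacle. If the direct chain fails, I would instead lower bound $H(X_{[L]}|\Sigma_{[L]})\geq H(X_{[L]\setminus\{L\}}|\Sigma_{[L]},K_{[L]})$ and recover $S_{[L-1]}$ from $(X_{[L]},K_{[L]})$ user by user, using the same decodability trick.

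For \eqref{eqth22}, I would work per user with $nR^{(K)}_l\geq H(K_l)\geq H(X_l|S_l)$ and then sum over $l$. The target is a symmetric, Han-type combinatorial bound
\begin{align*}
\sum_{l\in[L]}H(X_l|S_l)\geq \tfrac{L}{L-1}\,H(X_{[L]}|S_{[L]}).
\end{align*}
This would follow by applying the previous argument to each of the $L$ choices of the excluded user in place of $L$ and averaging. Combined with the bound $H(X_{[L]}|S_{[L]})\geq n(1-\alpha)(L-1)$ from the previous paragraph, it gives $\sum_l R^{(K)}_l\geq (1-\alpha)L$. Establishing this symmetric averaging inequality rigorously is the second delicate point. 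I would first try to prove, for each $l$, that $\sum_{j\neq l}H(X_j|S_j)\geq H(X_{[L]\setminus\{l\}}|S_{[L]})$, and combine it with the observation that any $L-1$ of the $X_j$ already carry the full leakage-reduced randomness, since the $l$-th is determined modulo $\Sigma_{[L]}$.
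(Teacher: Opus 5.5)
\paragraph{Gap in the sum-rate bound.} Your arguments for \eqref{eqth21} and \eqref{eqth23} are sound. The first is the paper's argument. In the second, the obstacle you flag is trivial: $j<l\leq L-1$ forces $j\notin\{l,L\}$. The step that actually needs a line is $H(S_l|X_l,G_l,\Sigma_{[L]})=0$ for your genie $G_l$. Given $G_l$, $X_l$ is not a function of $S_l$ alone, because $K_l$ is not in $G_l$. The equality still holds: $S_l$ is determined by $(X_l,G_l,S_L)$ via \eqref{eqrec}, and $S_L$ is independent of $(S_l,X_l,G_l)$, so $S_L$ can be dropped.

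The gap is in \eqref{eqth22}. Your target $\sum_{l}H(X_l|S_l)\geq \frac{L}{L-1}H(X_{[L]}|S_{[L]})$ is false in general. So is its justification, that $X_l$ is determined by the other messages modulo $\Sigma_{[L]}$. Take Example~\ref{ex1} and let $U$ also contain independent uniform strings $J_1,\dots,J_L\in\mathbb{F}_2^m$, with $J_l$ placed in $K_l$ and appended in the clear to $X_l$. The protocol is still $(0,T)$-private and decodable. However, $\sum_l H(X_l|S_l)=L(n+m)$, whereas $\frac{L}{L-1}H(X_{[L]}|S_{[L]})=Ln+\frac{L^2}{L-1}m$, which is strictly larger for $m>0$. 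Private randomness sent in the clear inflates $H(X_{[L]}|S_{[L]})$ beyond $H(X_{[L]\setminus\{l\}}|S_{[L]})$. Hence the factor $\frac{L}{L-1}$ cannot be obtained relative to $H(X_{[L]}|S_{[L]})$, and your result for \eqref{eqth23} cannot simply be reused.

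\paragraph{Repair.} Never pass through $H(X_{[L]}|S_{[L]})$; bound each $(L-1)$-user term directly.
\begin{itemize}
\item For each $l$, rerun your \eqref{eqth23} chain with user $l$ in the role of user $L$. This gives $H(X_{[L]\setminus\{l\}}|\Sigma_{[L]})\geq n(L-1)$.
\item Then $H(X_{[L]\setminus\{l\}}|S_{[L]})=H(X_{[L]\setminus\{l\}}|\Sigma_{[L]})-I(X_{[L]\setminus\{l\}};S_{[L]}|\Sigma_{[L]})\geq n(L-1)(1-\alpha)$. The subtracted leakage is at most $I(X_{[L]};S_{[L]}|\Sigma_{[L]})\leq n\alpha(L-1)$.
\item Combine this with your correct subadditivity bound $\sum_{j\neq l}H(X_j|S_j)\geq H(X_{[L]\setminus\{l\}}|S_{[L]})$ and sum over $l$. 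This yields $(L-1)\sum_j H(K_j)\geq (L-1)\sum_j H(X_j|S_j)\geq L(L-1)n(1-\alpha)$.
\end{itemize}

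\paragraph{Comparison with the paper.} The repaired route averages over the excluded user, as the paper does, but the paper averages leakages rather than conditional entropies. It shows $\sum_{l\neq i}I(S_l;X_l)\leq I(S_{[L]};X_{[L]}|\Sigma_{[L]})$ for each $i$, hence $\sum_l I(S_l;X_l)\leq \alpha nL$. It then combines this with $H(X_l)\geq n$ from \eqref{eqth21}. It therefore needs only single-user communication bounds, whereas your repaired version needs the joint bound on $L-1$ users.
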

\begin{proof}
    See Section \ref{seccv}.
\end{proof}
We then derive a converse on the individual rate of local randomness under the following leakage symmetry assumption:

\begin{enumerate}[(*)]
        \item \emph{For any set of colluding users $\mathcal{T} \subseteq [L-2]$, adding a single user to $ \mathcal{T} $ alters the leakage about the inputs of the remaining users  by a fixed amount, independent of the choice of $ \mathcal{T} $ or the user added.}
    \end{enumerate} 
This assumption  reflects a model that treats all users symmetrically, ensuring that no user has privileged access to others’ private data and that privacy guarantees are consistent across all users and collusion scenarios. The fixed change in leakage implies that each additional user contributes the same amount to the leakage, regardless of the subset they join or their identity.
For $ \mathcal{T} \subseteq [L]$, define the leakage $$\mathbb{L}(\mathcal{T}) \triangleq I(S_{[L]}; X_{[L]} |\Sigma_{[L]} S_{\mathcal{T}} K_{\mathcal{T}}),$$ then the leakage symmetry axiom (*) is mathematically expressed as follows:
\begin{align}
\exists \Delta \in \mathbb{R},    \forall \mathcal{T} \subseteq [L-2], \forall j \in \mathcal{T}^c, \mathbb{L}(\mathcal{T}) - \mathbb{L}(\mathcal{T}\cup\{j\}) = \Delta. \label{eqLsym}
\end{align}
From \eqref{eqLsym}, we deduce the following properties, whose proofs can be found in Appendix \ref{App_properties}.
    \begin{property} \label{prop1}We have
    \begin{align*}
    \forall l \in \llbracket 0,L\rrbracket, \exists C_l \in \mathbb{R}_+, \forall \mathcal{T} \subseteq [L],  |\mathcal{T}|=l\implies \mathbb{L}(\mathcal{T})= C_l .
\end{align*}
\end{property}
\begin{property}\label{prop2}Consider $(C_l)_{l \in \llbracket 0, L\rrbracket}$ as in Property \ref{prop1}. Then, we have\begin{align*}
    \forall l \in \llbracket 0,L-1\rrbracket, C_l \geq C_{l+1}.
\end{align*}
\end{property}
 \begin{property}\label{prop3} Consider $(C_l)_{l \in \llbracket 0, L\rrbracket}$ as in Property \ref{prop1}. Then, we have \begin{align*}
        \Delta & \leq n\alpha ,\\
        C_0 & \leq n\alpha (L-1).
\end{align*}
\end{property}

\begin{thm}[Converse]\label{thmi}
Suppose that leakage symmetry, i.e., \eqref{eqLsym}, is required. Then, for any $(L,n,$ $(R^{(X)}_l)_{l\in [L]},R^{(U)},(R^{(K)}_l)_{l\in [L]})$ private-sum computation protocol that is $(\delta= \alpha (L-1),T)$-private, where $\alpha \in [0,1]$, we~ have
\begin{align}
      R^{(X)}_l & \geq 1,  \forall l \in [L],  \label{eqth2a}\\
R^{(K)}_l & \geq 1-\alpha,   \forall l \in [L],\label{eqth2b}\\    R^{(U)} & \geq (1-\alpha) (L-1). \label{eqth2c}
\end{align}
\end{thm}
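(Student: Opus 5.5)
The communication bound \eqref{eqth2a} and the global randomness bound \eqref{eqth2c} need no new work: they are exactly \eqref{eqth21} and \eqref{eqth23} of Theorem~\ref{thmcv}, and these hold without the symmetry assumption. So the whole task is the individual local-randomness bound \eqref{eqth2b}. The plan has two steps: use leakage symmetry to show that every collusion set of size $L-2$ leaks at most $n\alpha$, even when $T<L-2$, and then run a one-time-pad type argument on the remaining pair of users.

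\emph{Step 1 (leakage at size $L-2$).} By Property~\ref{prop1}, $\mathbb{L}(\mathcal{T})=C_{|\mathcal{T}|}$. Iterating \eqref{eqLsym} gives $C_k=C_0-k\Delta$ for $k\in\llbracket 0,L-1\rrbracket$, since each step adds one user to a set of size at most $L-2$ (reading $\mathcal{T}\subseteq[L-2]$ in \eqref{eqLsym} as $|\mathcal{T}|\leq L-2$).
\begin{itemize}
\item For $|\mathcal{T}|=L-1$, the pair $(\Sigma_{[L]},S_{\mathcal{T}})$ determines $S_{[L]}$. Hence $C_{L-1}=0$, which gives $C_0=(L-1)\Delta$.
\item Therefore $C_{L-2}=C_0-(L-2)\Delta=\Delta\leq n\alpha$ by Property~\ref{prop3}.
\end{itemize}
Thus for any distinct $l,j$ and $\mathcal{T}\triangleq[L]\setminus\{l,j\}$,
\begin{align*}
I(S_l S_j;X_lX_j\mid\Sigma_{[L]},S_{\mathcal{T}},K_{\mathcal{T}})=\mathbb{L}(\mathcal{T})\leq n\alpha.
\end{align*}
This equality uses the remark after Definition~\ref{def2}, and the bound holds regardless of $T$. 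This is where symmetry does its work: without it, only sets of size at most $T$ are controlled.

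\emph{Step 2 (invertibility of the encoder given the key).} I will show $H(S_l\mid X_l,K_l)=0$ from the zero-error recovery condition \eqref{eqrec}.
\begin{itemize}
\item Fix any realization of $K_{[L]}$ and of $S_{-l}$; this is legitimate because $U$ is independent of $S_{[L]}$ and all realizations have positive probability.
\item Then $\Sigma_{[L]}=d(X_{[L]})$ is injective in $S_l$, so $S_l\mapsto e^{(X)}_l(K_l,S_l)$ must be injective for every value of $K_l$.
\item If $e^{(X)}_l(k,s)=e^{(X)}_l(k,s')$ with $s\neq s'$, the fusion center would output the same estimate for two different sums, contradicting \eqref{eqrec}.
\end{itemize}
This step is the one that needs care: the independence of $K_{[L]}$ and $S_{[L]}$ must be invoked correctly so that the worst-case injectivity argument is valid.

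\emph{Step 3 (chain of inequalities).} Conditioned on $(\Sigma_{[L]},S_{\mathcal{T}})$, the sum $S_l+S_j$ is known, so $S_j$ is a function of $S_l$. Also $S_l$ is still uniform and independent of $K_{\mathcal{T}}$, giving $H(S_l\mid\Sigma_{[L]},S_{\mathcal{T}},K_{\mathcal{T}})=n$. Writing $Z\triangleq(\Sigma_{[L]},S_{\mathcal{T}},K_{\mathcal{T}})$, we get
\begin{align*}
n(1-\alpha)&\leq H(S_l\mid Z)-I(S_lS_j;X_lX_j\mid Z)=H(S_l\mid X_l,X_j,Z)\\
&\leq H(S_l,K_l\mid X_l,X_j,Z)\leq H(K_l)+H(S_l\mid K_l,X_l)=H(K_l)\leq nR^{(K)}_l.
\end{align*}
The last equality is Step~2. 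Dividing by $n$ yields \eqref{eqth2b}.
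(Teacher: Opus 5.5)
Your proof is correct, and it reaches the same bound $H(K_l)\geq n-\Delta$ as the paper, but by a genuinely different route.

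The paper works in three steps:
\begin{enumerate}
\item Lemma~\ref{lem1} expresses $C_{|\mathcal{T}|}$ through $H(S_{\mathcal{T}^c}\mid X_{[L]},K_{\mathcal{T}},S_{\mathcal{T}})$.
\item Lemma~\ref{lem2} expresses each increment $C_{t+1}-C_t$ through $H(K_l,S_l\mid K_{\mathcal{T}_t},S_{\mathcal{T}_t},X_{[L]})$.
\item It starts from $H(K_l)\geq H(K_l,S_l\mid X_l)$, which uses $H(X_l)\geq n$. It then telescopes along the nested chain $\emptyset=\mathcal{T}_0\subset\cdots\subset\mathcal{T}_{L-1}=[L]\setminus\{l\}$, discarding nonnegative conditional-entropy differences, to obtain $H(K_l)\geq n+C_1-C_0$.
\end{enumerate}

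You instead use a single collusion set $[L]\setminus\{l,j\}$ and a two-user one-time-pad argument, which gives $H(K_l)\geq n-C_{L-2}$. The two bounds coincide because, under \eqref{eqLsym}, $C_0-C_1=C_{L-2}-C_{L-1}=\Delta$. The paper exploits the increment at the bottom of the chain; you exploit the one at the top.

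You also replace $H(X_l)\geq n$ with the injectivity $H(S_l\mid X_l,K_l)=0$. Both follow from zero-error recovery, and your justification of injectivity is sound.

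Your route buys brevity and transparency. It needs neither Lemma~\ref{lem1} nor Lemma~\ref{lem2}. It shows that $H(K_l)\geq n-\mathbb{L}([L]\setminus\{l,j\})$ holds for every protocol, with symmetry entering only to evaluate that one leakage term as $\Delta\leq n\alpha$.

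It also extends directly. For $\mathcal{T}\subsetneq[L]$ and $j\notin\mathcal{T}$, conditioning on the collusion set $[L]\setminus(\mathcal{T}\cup\{j\})$ gives $H(K_{\mathcal{T}})\geq|\mathcal{T}|(n-\Delta)\geq n|\mathcal{T}|(1-\alpha)$. This is Lemma~\ref{lem4} for strict subsets, which is needed for Theorem~\ref{th4}. The paper's heavier machinery is organized so that Lemma~\ref{lem1} can be reused in that later proof.
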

\begin{proof}
    See Section \ref{secthi}.
\end{proof}

\begin{rem}
   Suppose $\alpha=0$. Since \eqref{eqLsym} is always true when $\alpha = 0$, Theorem \ref{thmi} generalizes the converse in \cite{zhao2022secure}, as the local randomness rates and communication rates are not assumed to be necessarily  all equal in the derivation of our~converse.
\end{rem}

\subsection{Capacity results} \label{secca}

The following theorem shows that the converse bounds of Theorems \ref{thmcv} and \ref{thmi} can all be achieved simultaneously.
\begin{thm} \label{thmach}
For any $\alpha \in [0,1]\cap \mathbb{Q}$, we have
\begin{align*}
    R^{(K)}_{l,\star} & = 1- \alpha, \forall l\in [L],\text{ when \eqref{eqLsym} holds},\\
    R^{(K)}_{\Sigma,\star} & = (1- \alpha)L, \\
    R_{l,\star}^{(X)} & = 1, \forall l\in [L],\\
    R_{\star}^{(U)} & = (1-\alpha) (L-1).
\end{align*} 
Moreover, there exists a private-sum computation protocol that is $(\delta= \alpha (L-1),T)$-private and simultaneously achieves  $(R^{(X)}_{l,\star})_{l\in [L]}$, $R_{\star}^{(U)}$, $(R^{(K)}_{l,\star})_{l\in [L]}$, and $R^{(K)}_{\Sigma,\star}$.
\end{thm}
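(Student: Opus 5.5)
Theorems \ref{thmcv} and \ref{thmi} already give all the converse bounds. So the proof reduces to exhibiting, for every rational $\alpha = p/q$ with $0 \leq p \leq q$, a single $(\delta=\alpha(L-1),T)$-private protocol with
\[
R^{(X)}_l=1,\quad R^{(K)}_l=1-\alpha,\quad R^{(U)}=(1-\alpha)(L-1),
\]
and then checking that it satisfies the leakage symmetry \eqref{eqLsym}. Since $\sum_l R^{(K)}_l=(1-\alpha)L$, this one scheme meets every bound simultaneously. It also shows that the restriction to protocols satisfying \eqref{eqLsym} does not raise the infimum for $R^{(K)}_{l,\star}$.

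The construction splits each sequence into a part protected by a one-time pad and a part sent in the clear. Take $n$ to be a multiple of $q$ and write $S_l=(S_l^{(1)},S_l^{(2)})$, where $S_l^{(1)}\in\mathbb{F}_2^{n(1-\alpha)}$ and $S_l^{(2)}\in\mathbb{F}_2^{n\alpha}$. Apply the scheme of Example \ref{ex1} to the first components only:
\[
U=(N_l)_{l\in[L-1]},\quad N_l\in\mathbb{F}_2^{n(1-\alpha)},\quad K_l=N_l \text{ for } l\in[L-1],\quad K_L=-\sum_{l\in[L-1]}N_l,
\]
\[
X_l=\bigl(S_l^{(1)}+K_l,\;S_l^{(2)}\bigr).
\]
Each $X_l$ has length $n$, the local keys have length $n(1-\alpha)$, and $U$ has length $n(1-\alpha)(L-1)$, so the rates are as required. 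For decoding, summing the first components cancels the keys and gives $\sum_l S_l^{(1)}$, while the second components are summed directly, so $\Sigma_{[L]}$ is recovered exactly and \eqref{eqrec} holds.

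For privacy, fix $\mathcal{T}$ with $|\mathcal{T}|=t\leq L-2$ and split the leakage by the chain rule over the two components, using independence of the components and of the keys. The first-component term is $0$ by the one-time-pad argument of \cite{zhao2022secure}: given $K_{\mathcal{T}}$, the keys $K_{\mathcal{T}^c}$ are uniform subject only to their sum being $-\sum_{\mathcal{T}}K_l$, so $X^{(1)}_{\mathcal{T}^c}$ reveals nothing beyond the partial sum, which is already determined by $\Sigma_{[L]}$ and $S_{\mathcal{T}}$. The second-component term equals $H(S^{(2)}_{\mathcal{T}^c}\mid \Sigma^{(2)}_{[L]},S^{(2)}_{\mathcal{T}})$, since the $X^{(2)}$ coordinates are the data themselves. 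This entropy is $n\alpha(L-t-1)$, because the $L-t$ uniform blocks are constrained by a single known sum. Hence
\[
\mathbb{L}(\mathcal{T})=n\alpha(L-1-t)\leq n\alpha(L-1),
\]
which gives \eqref{eqprivact2}, and therefore \eqref{eqprivact} by Appendix \ref{app1}. Moreover $\mathbb{L}(\mathcal{T})-\mathbb{L}(\mathcal{T}\cup\{j\})=n\alpha$ independently of $\mathcal{T}$ and $j$, so \eqref{eqLsym} holds with $\Delta=n\alpha$.

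The step that needs the most care is the first-component zero-leakage claim, since one must condition correctly on $\Sigma_{[L]}$ (which couples both components) while using $|\mathcal{T}^c|\geq 2$. I would write the first component's mutual information as $H(X^{(1)}_{\mathcal{T}^c}\mid\cdot)-H(X^{(1)}_{\mathcal{T}^c}\mid S,\cdot)$ and show both terms equal $n(1-\alpha)(L-t-1)$. The second part of the theorem, simultaneous achievability, then follows immediately from the fact that all four quantities are met by this single scheme.
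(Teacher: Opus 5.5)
Your proposal is correct and is essentially the paper's proof. Both use the same time-sharing scheme, applying the one-time-pad scheme of Example~\ref{ex1} to a $(1-\alpha)$-fraction of each sequence and sending the remaining $\alpha$-fraction uncoded, together with the converses of Theorems~\ref{thmcv} and~\ref{thmi}. Your component-wise split differs only in giving the exact leakage $n\alpha(L-1-|\mathcal{T}|)$ rather than an upper bound, which explicitly verifies \eqref{eqLsym} for the scheme (needed for the $R^{(K)}_{l,\star}$ claim but left implicit in the paper), and your choice $0\leq p\leq q$ also covers the endpoints $\alpha\in\{0,1\}$ that the paper's $n_1,n_2\in\mathbb{N}^*$ excludes.
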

\begin{proof}
    See Section \ref{secach}.
\end{proof}

In Theorem \ref{thmach}, $\alpha$ is a rational number. However, note that by density of $\mathbb{Q}$ in $\mathbb{R}$, for any $\alpha' \in \mathbb{R}_+$, for any $\epsilon>0$, there exists $\alpha \in \mathbb{Q}_+$ such that $|\alpha - \alpha'| \leq \epsilon$.	

\begin{rem}
The  leakage symmetry assumption is only used to prove optimality of the individual rate of local randomness in Theorem \ref{thmach}.
Therefore,  in the absence of leakage symmetry, the following characterization obtained in Theorem~\ref{thmach} holds: 
    \begin{enumerate}[(i)]
        \item the optimal sum rate of local randomness,
        \item  the optimal individual rate of communication,
\item the optimal rate of global randomness.
    \end{enumerate} 
    These three results are valid for any $\alpha>0$, and thus subsumes previous results previously derived for $\alpha=0$, e.g.,~\cite{zhao2022secure}.
\end{rem}

\subsection{Connection between secret sharing and private summation} \label{secsc}

Before we establish the connection between the local randomness structure and secret sharing in Section \ref{secconn}, we review the notion of secret sharing in Section \ref{secprem}.

\subsubsection{Preliminaries}\label{secprem}
Uniform secret sharing with leakage, e.g.,~\cite{yoshida2018optimal,yoshida2012optimum,chou2020secure,chou2023secure}, has been defined as follows. 
\begin{defn} [Uniform secret sharing with leakage]
Let $\alpha \in [0,1]$, $t \in [L]$ and $z\in [t-1]$. An $(\alpha, t, z)$- secret sharing scheme consists of
\begin{itemize}
    \item  A secret $S$ uniformly distributed over $\{0, 1\}^{n_{s}};$
    \item  A stochastic encoder $ e : \{0, 1\}
^{n_{s}} \times \{0, 1\}^{n_{r}} \rightarrow \{0, 1\}^{n_{sh} },(S, R)\mapsto {( H_{l})_{l\in [L]}}$, which takes as input the secret $S$ and a randomization sequence $R$ uniformly distributed over $\{0, 1\}^{n_{r}}$ and independent of $S$, and outputs $L$ sequences ${( H_{l})_{l\in [L]}}$, with sum length  $n_{sh}$,  that are not necessarily of same length. ${( H_{l})_{l\in [L]}}$ are referred to as the shares of the secret sharing. For any $\mathcal{S} \subseteq [L]$, we define $H_{\mathcal{S}}={( H_{l})_{l\in \mathcal{S}}}; $ \end{itemize}
and satisfies the two conditions
\begin{align}
    \displaystyle \max_{\mathcal{T} \subseteq [L]:|\mathcal{T}|=t} H(S|H_{\mathcal{T}})&=0, \phantom{-}  \text{(Recoverability)} \label{marker12}\\
   \displaystyle \max_{\mathcal{U} \subseteq [L]:|\mathcal{U}|\leq z } I(S;H_{\mathcal{U}})&\leq \alpha H(S), \phantom{-}\text{(Privacy leakage)}\label{marker13}
\end{align} 
and the  leakage symmetry condition \begin{align} \label{eqlt}
\forall	l \in  [L],\exists C_{l} \in \mathbb{R}^+ , \forall \mathcal{T} \in [L], |\mathcal{T}|=l \! \implies \! \frac{I(S;H_{\mathcal{T}})}{H(S)} &= C_{l}.
\end{align}
\label{def}
\end{defn}

\begin{defn}
A $(t,\Delta,L)$  ramp secret sharing scheme, e.g., \cite{yamamoto1986secret,blakley1984security}, is an $(\alpha, t, z)$- secret sharing scheme with $\alpha =0$, $z = t - \Delta$, and optimal share sizes. 
 \end{defn}

Note that, by \cite{chou2023secure}, for a $(t,\Delta,L)$  ramp secret sharing scheme, we have
\begin{align}
C_l \triangleq \begin{cases}
0  & \text{if } l \in  \llbracket 0, t - \Delta \rrbracket   \\
 \frac{l - t + \Delta}{\Delta}    & \text{if } l \in  \llbracket t - \Delta +1, t  \rrbracket   \\
  1 & \text{if } l \in  \llbracket t +1 , L \rrbracket
 \end{cases},\label{eqrampsec}\end{align} meaning that any $t$ shares can reconstruct $S$, any set of shares less than or equal to $t -\Delta$ does not leak any information about $S$, and for sets of shares with cardinality in $ \llbracket t - \Delta +1, t  \rrbracket$, the leakage increases linearly with the set cardinality.
 \medskip
\subsubsection{Private summation and  secret sharing}\label{secconn}
While it was known that secret sharing can be employed to generate the local randomness at the users when $\alpha=0$ in our setting, e.g., \cite[Theorem 1]{zhao2022secure}, Theorem~\ref{th4} establishes a stronger connection  between private summation and secret sharing by showing that secret sharing is necessary to generate the local randomness needed for private summation. 

\begin{thm}[Necessary Condition] \label{th4}
Suppose that leakage symmetry, i.e.,~\eqref{eqLsym} is required,  and consider an optimal private-sum computation protocol, i.e., 
\begin{align*}
    R^{(K)}_l  &=R^{(K)}_{l,\star} , \forall l\in [L],\\
    \textstyle\sum_{l\in[L]}R^{(K)}_l  &=R^{(K)}_{\Sigma,\star} , \\
    R^{(X)}_l &=R_{l,\star}^{(X)} , \forall l\in [L],\\
    R^{(U)}  &= R_{\star}^{(U)} .
\end{align*} 
Then,  $(K_l)_{l\in [L]}$ must be the shares of a $(L-1, L-1 ,L)$ ramp secret sharing scheme where $U$ is the secret. 
\end{thm}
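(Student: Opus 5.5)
The plan is to verify the conditions of Definition~\ref{def} with secret $S \leftarrow U$, shares $H_l \leftarrow K_l$, $\alpha \leftarrow 0$, $t = L-1$, $z = t-\Delta = 0$. We then show that the share sizes are optimal and that the leakage profile is the one in \eqref{eqrampsec}, namely $I(U;K_{\mathcal{T}})/H(U) = |\mathcal{T}|/(L-1)$ for $|\mathcal{T}| \leq L-1$ and $=1$ for $|\mathcal{T}|=L$.

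Since $K_l = e_l(U)$ is deterministic, $I(U;K_{\mathcal{T}}) = H(K_{\mathcal{T}})$. Also $H(U) = nR^{(U)}_\star = n(1-\alpha)(L-1)$ because $U$ is uniform. Hence everything reduces to proving
\begin{align*}
H(K_{\mathcal{T}}) = |\mathcal{T}|\, n(1-\alpha) \quad \text{for all } \mathcal{T}\subseteq[L] \text{ with } |\mathcal{T}|\leq L-1 .
\end{align*}
For $|\mathcal{T}|=L-1$ this yields $H(U|K_{\mathcal{T}}) = H(U) - H(K_{\mathcal{T}}) = 0$, which is recoverability \eqref{marker12}. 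The privacy condition with $z=0$ is trivial. The leakage symmetry condition \eqref{eqlt} holds with $C_l = l/(L-1)$ for $l \leq L-1$, and $C_L = 1$ since $K_{[L]}$ contains $K_{[L-1]}$. Optimality of share sizes follows from $H(K_l) = n(1-\alpha) = nR^{(K)}_{l,\star}$, by Theorem~\ref{thmach}.

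\emph{Upper bound.} Subadditivity gives $H(K_{\mathcal{T}}) \leq \sum_{l\in\mathcal{T}} H(K_l) \leq \sum_{l\in\mathcal{T}} nR^{(K)}_l = |\mathcal{T}|\,n(1-\alpha)$, using the optimal individual rates of Theorem~\ref{thmach} under \eqref{eqLsym}.

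\emph{Lower bound.} Order $\mathcal{T} = \{i_1,\dots,i_m\}$ with $m \leq L-1$ and expand $H(K_{\mathcal{T}}) = \sum_{k} H(K_{i_k}\,|\,K_{i_1},\dots,K_{i_{k-1}})$. It then suffices to show that each term is at least $n(1-\alpha)$. For this I would reuse the argument behind \eqref{eqth2b} in Section~\ref{secthi}, but keep it in conditional form. Let $\mathcal{A} = \{i_1,\dots,i_{k-1}\}$ and pick $j \notin \mathcal{A}\cup\{i_k\}$, which exists because $m \leq L-1$. Then bound
\begin{align*}
H(K_{i_k}|K_{\mathcal{A}}) &\geq I(K_{i_k}; X_{i_k} \,|\, K_{\mathcal{A}}, S_{\mathcal{A}}, S_{i_k}, \Sigma_{[L]}) \\
&\geq H(X_{i_k} \,|\, K_{\mathcal{A}}, S_{\mathcal{A}}, S_{i_k}, \Sigma_{[L]}) .
\end{align*}
The second step holds because $X_{i_k}$ is a function of $(K_{i_k},S_{i_k})$. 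Next, relate this to the leakage difference $\mathbb{L}(\mathcal{A}) - \mathbb{L}(\mathcal{A}\cup\{i_k\})$ plus a term that equals $n$. That term comes from correctness \eqref{eqrec} and the fact that $X_{i_k}$, together with the other users' messages, must carry the full $n$ bits of $S_{i_k}$ given the remaining side information; this is the mechanism behind $R^{(X)}_l \geq 1$. The leakage difference equals $\Delta \leq n\alpha$ by \eqref{eqLsym} and Property~\ref{prop3}, which yields each term $\geq n - \Delta \geq n(1-\alpha)$. When $\mathcal{A}\cup\{i_k\}$ reaches size $L-1$, the set $\mathcal{A}$ itself has size at most $L-2$, so \eqref{eqLsym} still applies to the relevant pair of leakages.

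The main obstacle is this conditional lower bound. The converse in Section~\ref{secthi} may only be written for the unconditioned $H(K_l)$, so the chain of inequalities must be redone carefully with $K_{\mathcal{A}}$ and $S_{\mathcal{A}}$ in the conditioning. One must also check that exactly one symmetry increment $\Delta$, and not $C_0$, appears in each step. If this direct route fails, a fallback is a counting argument that uses only the global facts. On one side, $H(K_{[L-1]}) \leq H(U) = n(1-\alpha)(L-1)$ and $\sum_l H(K_l) = n(1-\alpha)L$. On the other side, applying the lower bound \eqref{eqth23} of Theorem~\ref{thmcv} to the sub-collection formed by any $L-1$ users yields $H(K_{\mathcal{T}}) \geq n(1-\alpha)(L-1)$ for every $|\mathcal{T}|=L-1$. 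Equality in subadditivity over $\mathcal{T}$ then forces the $K_l$, $l\in\mathcal{T}$, to be mutually independent with $H(K_l)=n(1-\alpha)$. This gives $H(K_{\mathcal{T}'}) = |\mathcal{T}'|n(1-\alpha)$ for every subset $\mathcal{T}'$ of size at most $L-1$, which is all that is required.
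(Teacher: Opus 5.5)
There is a genuine gap. The lower bound $H(K_{\mathcal{T}})\geq |\mathcal{T}|\,n(1-\alpha)$ for $|\mathcal{T}|\leq L-1$ is the heart of the theorem, and you never actually prove it.

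\textbf{Main route.} Since $S_{[L]}$ is independent of $U$, the quantity you reach simplifies to $H(X_{i_k}\mid K_{\mathcal{A}},S_{\mathcal{A}},S_{i_k},\Sigma_{[L]})=H(X_{i_k}\mid K_{\mathcal{A}})-I(X_{i_k};S_{i_k}\mid K_{\mathcal{A}})$. What you need is therefore an \emph{upper} bound $I(X_{i_k};S_{i_k}\mid K_{\mathcal{A}})\leq\Delta$. You only assert that it follows by ``relating'' it to $\mathbb{L}(\mathcal{A})-\mathbb{L}(\mathcal{A}\cup\{i_k\})$. That difference naturally gives the opposite kind of statement: the identity $\mathbb{L}(\mathcal{A})+n=I(X_{\mathcal{A}^c};S_{\mathcal{A}^c}\mid K_{\mathcal{A}})$ yields $\Delta\leq I(X_{\mathcal{A}^c};S_{i_k}\mid K_{\mathcal{A}})$, which is a lower bound on what is learned about $S_{i_k}$.

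\textbf{Fallback.} It is not valid as written. The $L-1$ users in $\mathcal{T}$ do not form a private-sum protocol on their own, because decoding uses $X_j$ and $K_j$ is correlated with $K_{\mathcal{T}}$. So \eqref{eqth23} cannot be invoked for them. If you make them a protocol by handing $(S_j,K_j)$ to the fusion center, the argument behind \eqref{eqth23} gives only $H(K_{\mathcal{T}}\mid K_j)\geq n(L-2)-C_1$, which is one user short. The global facts you list ($H(K_{[L-1]})\leq H(U)$ and $\sum_l H(K_l)=n(1-\alpha)L$) do not control individual $(L-1)$-subsets either.

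Both routes can be repaired.

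\textbf{Repairing the main route.} Extend $\mathcal{A}$ to some $\mathcal{B}\supseteq\mathcal{A}$ with $|\mathcal{B}|=L-2$ and $\mathcal{B}^c=\{i_k,j\}$; this is where your extra user $j$ is needed.
\begin{itemize}
\item Since $S_{i_k}$ is independent of $K_{[L]}$, $I(X_{i_k};S_{i_k}\mid K_{\mathcal{A}})\leq I(X_{i_k};S_{i_k}\mid K_{\mathcal{B}})$.
\item Given $(\Sigma_{[L]},S_{\mathcal{B}})$, only $S_{i_k}$ and $S_j$ remain unknown, and $S_j$ acts as a one-time pad. Hence $I(X_{i_k};S_{i_k}\mid K_{\mathcal{B}})\leq\mathbb{L}(\mathcal{B})=C_{L-2}=\Delta\leq n\alpha$, by \eqref{eqLsym} and Property~\ref{prop3}.
\item As in the proof of \eqref{eqth21}, $H(X_{i_k}\mid K_{\mathcal{A}})\geq n$.
\end{itemize}
Together these give that each chain-rule term is at least $n-\Delta\geq n(1-\alpha)$.

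\textbf{Repairing the fallback.} Use the paper's argument from Lemma~\ref{lemcc} and the case $|\mathcal{T}|=L-1$ of Lemma~\ref{lem4}:
$H(K_{\mathcal{T}})\geq H(K_{\mathcal{T}},S_{\mathcal{T}}\mid X_{\mathcal{T}})\geq H(S_{[L]}\mid X_{[L]})=n(L-1)-C_0\geq n(L-1)(1-\alpha)$.
The first step needs the joint bound $H(X_{\mathcal{T}})\geq n|\mathcal{T}|$, not just \eqref{eqth21}. With this in place, your equality-in-subadditivity step is correct. It actually shortens the paper's proof. The paper establishes Lemma~\ref{lem4} separately for $|\mathcal{T}|\leq L-2$ via Lemma~\ref{lem1} and \eqref{eqLsym}. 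You only need the $(L-1)$-subsets together with the optimal individual rates $H(K_l)\leq n(1-\alpha)$.
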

\begin{proof}
See Section \ref{appproofth4}.
\end{proof}

\begin{rem}
    When $\alpha =0$, \eqref{eqLsym} is always satisfied, hence Theorem \ref{th4} proves that secret sharing is always necessary to generate the local randomness needed for secure summation. The same result is proved for $\alpha >0$ but only under the leakage symmetry assumption \eqref{eqLsym}.
\end{rem}

\section{Proof of Theorem \ref{thmcv}} \label{seccv}

\subsection{Communication rate converse: Proof of \eqref{eqth21}}
For any $l\in[L]$, we have
\begin{align*}
    &nR_l^{(X)}\\
    & \stackrel{(a)}\geq H(X_l) \\
    & \stackrel{(b)}\geq H(X_l| S_{[L]\backslash \{l\}} ,K_{[L]\backslash \{l\}})\\
    & \geq  I(X_l;\Sigma_{[L]}| S_{[L]\backslash \{l\}} ,K_{[L]\backslash \{l\}}) \\
    & = H(\Sigma_{[L]}| S_{[L]\backslash \{l\}}, K_{[L]\backslash \{l\}})\\
& \phantom{--} - H(\Sigma_{[L]} |X_l ,S_{[L]\backslash \{l\}}, K_{[L]\backslash \{l\}})\\
        & \stackrel{(c)}= H(\Sigma_{[L]} ,S_{[L]\backslash \{l\}}| S_{[L]\backslash \{l\}} ,K_{[L]\backslash \{l\}}) \\
        & \phantom{--}- H(\Sigma_{[L]} |X_l ,S_{[L]\backslash \{l\}} ,K_{[L]\backslash \{l\}})\\
        & \stackrel{(d)}= H(\Sigma_{[L]} ,S_l ,S_{[L]\backslash \{l\}}| S_{[L]\backslash \{l\}} ,K_{[L]\backslash \{l\}}) \\
        & \phantom{--}- H(\Sigma_{[L]} |X_l ,S_{[L]\backslash \{l\}} ,K_{[L]\backslash \{l\}})\\
                & \stackrel{(e)}\geq H( S_{l}| S_{[L]\backslash \{l\}} ,K_{[L]\backslash \{l\}}) - H(\Sigma_{[L]} |X_l, S_{[L]\backslash \{l\}}, K_{[L]\backslash \{l\}})\\
    & \stackrel{(f)}= H(S_{l}| S_{[L]\backslash \{l\}} ,K_{[L]\backslash \{l\}}) \\
& \phantom{--}- H(\Sigma_{[L]} |X_{[L]} ,S_{[L]\backslash \{l\}} ,K_{[L]\backslash \{l\}})\\
        & \stackrel{(g)}= H(S_{l}| S_{[L]\backslash \{l\}} ,K_{[L]\backslash \{l\}}) \\
    & \stackrel{(h)}= H(S_l) \\
    & \stackrel{(i)}= n,
\end{align*}
where 
\begin{enumerate}[(a)]
    \item holds by Definition \ref{def1};
    \item  holds because conditioning reduces entropy;
    \item holds by the chain rule;
    \item holds because $S_l$ can be reconstructed from $(\Sigma_{[L]}, S_{[L]\backslash \{l\}})$;
        \item holds by the chain rule;
    \item holds because $X_{[L]\backslash \{ l\}}$ is a function of $(S_{[L]\backslash \{l\}}, K_{[L]\backslash \{l\}})$;
    \item holds because $$H(\Sigma_{[L]} |X_{[L]} ,S_{[L]\backslash \{l\}} ,K_{[L]\backslash \{l\}}) \leq H(\Sigma_{[L]} |X_{[L]} )=0,$$ where the equality holds by \eqref{eqrec};
        \item holds by  independence between $S_l$ and $(S_{[L]\backslash \{l\}} ,K_{[L]\backslash \{l\}})$;
    \item holds by uniformity of $S_l$.
\end{enumerate}

\subsection{Local randomness sum-rate converse: Proof of \eqref{eqth22}}
We start by proving the following lemma that upper bounds the sum of of the information leaked by the encoder output of each user about their individual sequence.

\begin{lem}
We have    \begin{align}
  \sum_{l\in[L]} I(S_l;X_l) \leq \alpha n L.  \label{eql3}
\end{align}
\end{lem}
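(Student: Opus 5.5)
The idea is to use the privacy constraint \eqref{eqprivact2} with $\mathcal{T}=\emptyset$, which is always allowed since $T\geq 0$. This gives $I(S_{[L]};X_{[L]}|\Sigma_{[L]})\leq n\alpha(L-1)$. We then lower bound this single conditional mutual information by $\sum_{l\neq j} I(S_l;X_l)$ for each fixed $j\in[L]$, and average over $j$.

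\textbf{Step 1: rewrite the leakage.} Since $\Sigma_{[L]}$ is a function of $X_{[L]}$ by \eqref{eqrec}, we have
\begin{align*}
I(S_{[L]};X_{[L]}|\Sigma_{[L]}) = H(S_{[L]}|\Sigma_{[L]}) - H(S_{[L]}|X_{[L]}).
\end{align*}
The $S_l$ are independent and uniform over $\mathbb{F}_2^n$, so $H(S_{[L]}|\Sigma_{[L]}) = nL - n = n(L-1)$.

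\textbf{Step 2: bound the residual entropy.} Fix $j\in[L]$. By the chain rule,
\begin{align*}
H(S_{[L]}|X_{[L]}) = H(S_{[L]\backslash\{j\}}|X_{[L]}) + H(S_j|S_{[L]\backslash\{j\}},X_{[L]}).
\end{align*}
The second term is zero, because $S_j = \Sigma_{[L]} - \sum_{l\neq j} S_l$ and $\Sigma_{[L]}$ is a function of $X_{[L]}$. Applying subadditivity and then the fact that conditioning reduces entropy,
\begin{align*}
H(S_{[L]\backslash\{j\}}|X_{[L]}) \leq \sum_{l\neq j} H(S_l|X_{[L]}) \leq \sum_{l\neq j} H(S_l|X_l).
\end{align*}
Since $H(S_l)=n$, this yields
\begin{align*}
I(S_{[L]};X_{[L]}|\Sigma_{[L]}) \geq n(L-1) - \sum_{l\neq j}\bigl(n - I(S_l;X_l)\bigr) = \sum_{l\neq j} I(S_l;X_l).
\end{align*}

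\textbf{Step 3: average over $j$.} Summing the inequality of Step 2 over $j\in[L]$, each $I(S_l;X_l)$ appears exactly $L-1$ times on the right-hand side. Combining with the privacy bound gives
\begin{align*}
(L-1)\sum_{l\in[L]} I(S_l;X_l) \leq L\, n\alpha(L-1).
\end{align*}
Since $L\geq 2$, dividing by $L-1$ gives \eqref{eql3}.

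The main point to get right is Step 2, namely using $\Sigma_{[L]}$ to eliminate one user's entropy. A naive chain-rule bound on $I(S_{[L]};X_{[L]})$ only yields $\sum_l I(S_l;X_l)\leq n+n\alpha(L-1)$, which is too weak when $\alpha<1$. Dropping one index $j$ and then symmetrizing over $j$ is what recovers the factor $L/(L-1)$ needed for the bound $\alpha n L$.
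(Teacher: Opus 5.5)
Your proof is correct. Its outer skeleton matches the paper's: apply \eqref{eqprivact2} with $\mathcal{T}=\emptyset$, show $I(S_{[L]};X_{[L]}|\Sigma_{[L]})\geq\sum_{l\neq j}I(S_l;X_l)$ for each fixed excluded index $j$, and sum over $j$ using that each term is counted $L-1$ times.

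The difference is in how you prove the per-$j$ inequality. The paper uses the mutual-information chain rule over $l\neq j$ and keeps only $X_l$. It then replaces $\Sigma_{[L]}$ by the partial sum $\Sigma_{[l:L]\cup\{j\}}$ and removes the conditioning on $S_{[l-1]\backslash\{j\}}$. That removal uses that $X_l$ is a function of $(S_l,K_l)$, with $K_l$ a function of $U$ and $U$ independent of $S_{[L]}$. Finally it discards $I(S_l;\Sigma_{[l:L]\cup\{j\}})$ by the one-time pad. You instead write the leakage as $n(L-1)$ minus a residual entropy and bound that residual by subadditivity and conditioning. This is shorter and needs neither the encoder structure nor the one-time pad.

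The two routes use complementary hypotheses. The paper's chain never invokes \eqref{eqrec}. Yours uses it twice: to drop $\Sigma_{[L]}$ from the conditioning, and to get $H(S_j|S_{[L]\backslash\{j\}},X_{[L]})=0$. In fact you can avoid \eqref{eqrec} too. If you keep $\Sigma_{[L]}$ in the conditioning, then $H(S_{[L]}|X_{[L]},\Sigma_{[L]})=H(S_{[L]\backslash\{j\}}|X_{[L]},\Sigma_{[L]})\leq\sum_{l\neq j}H(S_l|X_l)$ holds directly. Your argument then gives the inequality for an arbitrary $X_{[L]}$, using only that the $S_l$ are independent and uniform. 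That is more general than what the paper's derivation establishes.
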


\begin{proof}

For any $i\in[L]$, we have
\begin{align*}
&\alpha n(L-1)  \displaybreak[0]\\
&\stackrel{(a)}\geq
I(S_{[L]}; X_{[L]} |\Sigma_{[L]}  ) \displaybreak[0]\\
& \stackrel{(b)}=
\sum_{l\in[L]\backslash\{i\}} I(S_{l}; X_{[L]} |S_{[l-1]\backslash\{i\}},\Sigma_{[L]}  )  \\
&\phantom{--}+ I(S_{i}; X_{[L]} |S_{[L]\backslash\{i\}},\Sigma_{[L]}  )\\
& \stackrel{(c)} =
\sum_{l\in[L]\backslash\{i\}} I(S_{l}; X_{[L]} |S_{[l-1]\backslash\{i\}},\Sigma_{[L]}  ) \\
& \stackrel{(d)} \geq
\sum_{l\in[L]\backslash\{i\}} I(S_{l}; X_{l} |S_{[l-1]\backslash\{i\}},\Sigma_{[L]}  ) \\
& \stackrel{(e)}=
\sum_{l\in[L]\backslash\{i\}} I(S_{l}; X_{l} |S_{[l-1]\backslash\{i\}},\Sigma_{[l:L]\cup\{i\}}  ) \\
& \stackrel{(f)}= 
\sum_{l\in[L]\backslash\{i\}} [I(S_{l}; X_{l}, S_{[l-1]\backslash\{i\}} |\Sigma_{[l:L]\cup\{i\}}  ) \\
&\phantom{--}-I(S_{l}; S_{[l-1]\backslash\{i\}}| \Sigma_{[l:L]\cup\{i\}}  ) ]\\
& \stackrel{(g)} = 
\sum_{l\in[L]\backslash\{i\}} I(S_{l}; X_{l}, S_{[l-1]\backslash\{i\}} |\Sigma_{[l:L]\cup\{i\}}  ) \\
& \stackrel{(h)}=
\sum_{l\in[L]\backslash\{i\}} [I(S_{l}; X_{l}  |\Sigma_{[l:L]\cup\{i\}}  ) \\
&\phantom{--}+ I(S_{l}; S_{[l-1]\backslash\{i\}} |X_{l} ,\Sigma_{[l:L]\cup\{i\}}  )]\\
& \stackrel{(i)} =
\sum_{l\in[L]\backslash\{i\}} I(S_{l}; X_{l}  |\Sigma_{[l:L]\cup\{i\}}  )\\
& =
\sum_{l\in[L]\backslash\{i\}} [I(S_{l}; X_{l},  \Sigma_{[l:L]\cup\{i\}}  ) - I(S_{l}; \Sigma_{[l:L]\cup\{i\}}  )]\\
& \stackrel{(j)} =  
\sum_{l\in[L]\backslash\{i\}} I(S_{l}; X_{l},  \Sigma_{[l:L]\cup\{i\}}  ) \\
& \geq  
\sum_{l\in[L]\backslash\{i\}} I(S_{l}; X_{l}  ), \numberthis \label{eql2}
\end{align*}
where 
\begin{enumerate}[(a)]
    \item holds by \eqref{eqprivact2} with $\mathcal{T} = 0$;
    \item holds by the chain rule;
    \item holds because $S_i$ can be recovered from $(S_{[L]\backslash\{i\}},\Sigma_{[L]})$;
        \item holds by the chain rule;
    \item holds with the notation \begin{align*}\Sigma_{[l:L]\cup \{i\}} \triangleq \sum_{j \in \{l, l+1, \ldots,L\}\cup \{i\}}S_j;\end{align*}
        \item holds by the chain rule;
    \item holds because \begin{align*}
        &I(S_{l}; S_{[l-1]\backslash\{i\}}| \Sigma_{[l:L]\cup\{i\}}  )  \\
        &\leq I(S_{l},\Sigma_{[l:L]\cup\{i\}}; S_{[l-1]\backslash\{i\}} ) \\&=0\end{align*}where the inequality holds by the chain rule and non-negativity of mutual information, and the equality holds because $(S_{l},\Sigma_{[l:L]\cup\{i\}})$ is a deterministic function of  $S_{\llbracket l,L\rrbracket \cup\{i\}}$, which is independent of $S_{[l-1]\backslash\{i\}} $ by independence of the users’ sequences, since $(\llbracket l,L\rrbracket\cup\{i\} )\cap ([l-1]\backslash\{i\})=  \emptyset$; 
                \item holds by the chain rule;
    \item holds because
    \begin{align*}
        &I(S_{l}; S_{[l-1]\backslash\{i\}} |X_{l} ,\Sigma_{[l:L]\cup\{i\}})  \\&\leq I(S_{l}, \Sigma_{[l:L]\cup\{i\}},X_{l}; S_{[l-1]\backslash\{i\}}  )\\&\leq I(S_{l}, \Sigma_{[l:L]\cup\{i\}},K_{l}; S_{[l-1]\backslash\{i\}}  )\\&\leq I(S_{l} ,\Sigma_{[l:L]\cup\{i\}},U; S_{[l-1]\backslash\{i\}}  )\\ &= I(S_{l}, \Sigma_{[l:L]\cup\{i\}}; S_{[l-1]\backslash\{i\}}  )\\&=0,\end{align*} where the first inequality holds by the chain rule and non-negativity of the mutual information, the second  inequality holds because $X_l$ is a function of $(S_l,K_l)$, the third  inequality holds because $K_l$ is a function of $U$,  the first equality holds by independence of the users' sequences and the global randomness, and the last equality holds as in (g);
    \item holds by the one-time pad lemma, e.g., \cite[Th. 2.4]{stinson2005cryptography}, using the uniformity of the users' sequences and that $l\neq i$. 
\end{enumerate}

Hence, by remarking that
\begin{align}
    \sum_{i\in [L]}\sum_{l\in[L]\backslash \{ i\}} I(S_l;X_l) = (L-1) \sum_{l\in[L]} I(S_l;X_l), \label{eql1}
\end{align}
from   \eqref{eql2} and \eqref{eql1}, we obtain \eqref{eql3}.

\end{proof}

Then, we have
\begin{align*}
  \sum_{l\in[L]}  nR^{(K)}_{l} 
    & \stackrel{(a)}\geq \sum_{l\in [L]} H(K_l) \\
    & \stackrel{(b)} = \sum_{l\in [L]}  H(K_l|S_l)    \displaybreak[0]\\
    & \geq \sum_{l\in [L]}  I(K_l;X_l| S_{l})   \\
    & = \sum_{l\in [L]} [H(X_l|S_l)  - H(X_l|K_l ,S_l) ]\\
    & \stackrel{(c)}= \sum_{l\in [L]} H(X_l|S_l)  \\
    & = \sum_{l\in [L]}[H(X_l) - I(X_l;S_l)] \\
    & \stackrel{(d)}\geq nL - \sum_{l\in [L]}I(X_l;S_l) \\
    & \stackrel{(e)}\geq nL - \alpha nL\\
    & = nL(1 - \alpha),
\end{align*}
where 
\begin{enumerate}[(a)]
    \item holds by Definition \ref{def1};
    \item holds by independence between $U$ and $S_{[L]}$;
    \item holds by Definition \ref{def1}; 
    \item  holds by~\eqref{eqth21};
    \item holds by~\eqref{eql3}.
\end{enumerate}

\subsection{Global randomness rate converse: Proof of \eqref{eqth23}}

We have
\begin{align*}
    &nR^{(U)} \\
    &\stackrel{(a)}\geq H(U) \\
    & \stackrel{(b)}\geq H(K_{[L]}) \\
    & \stackrel{(c)}\geq H(K_{[L]}| S_{[L]}) \\
    & \geq I(K_{[L]}; X_{[L]} | S_{[L]}) \\
    & \stackrel{(d)}= H(X_{[L]} | S_{[L]}) \\
    & = H(X_{[L]} ) - I( X_{[L]}; S_{[L]}) \\
    & \stackrel{(e)}= H(X_{[L]} ) - I( X_{[L]}; \Sigma_{[L]}, S_{[L]}) \\
    & = H(X_{[L]} ) - I( X_{[L]}; \Sigma_{[L]} ) -I( X_{[L]};  S_{[L]}|\Sigma_{[L]})\\
    & \stackrel{(f)}\geq H(X_{[L]} ) - I( X_{[L]}; \Sigma_{[L]} ) - \alpha n (L-1)\\
    & = H(X_{[L]} ) - H(\Sigma_{[L]}) + H( \Sigma_{[L]}| X_{[L]} ) - \alpha n (L-1)\\
    & \stackrel{(g)}= H(X_{[L]} ) - H(\Sigma_{[L]})  - \alpha n (L-1)\\
    & \stackrel{(h)}= H(X_{[L]} ) - n - \alpha n (L-1)\\
    & = \sum_{l \in [L]} H(X_{l} |X_{[l-1]} ) - n - \alpha n (L-1)\\
    & \stackrel{(i)}\geq \sum_{l \in [L]} H(X_{l} |S_{[L]\backslash \{ l\}} ,K_{[L]\backslash \{ l\}} ,X_{[l-1]}) - n - \alpha n (L-1)\\
    & \stackrel{(j)}= \sum_{l \in [L]} H(X_{l} |S_{[L]\backslash \{ l\}}, K_{[L]\backslash \{ l\}} ) - n - \alpha n (L-1)\\
    & \geq \sum_{l \in [L]} I(X_{l} ; \Sigma_{[L]} |S_{[L]\backslash \{ l\}} ,K_{[L]\backslash \{ l\}} ) - n - \alpha n (L-1)\\
    & = \sum_{l \in [L]} [H(\Sigma_{[L]}  |S_{[L]\backslash \{ l\}} ,K_{[L]\backslash \{ l\}} ) \\
&\phantom{--}- H(\Sigma_{[L]}|X_{l} , S_{[L]\backslash \{ l\}},K_{[L]\backslash \{ l\}} )] - n - \alpha n (L-1)\\
    & \stackrel{(k)}= \sum_{l \in [L]} H(\Sigma_{[L]}  |S_{[L]\backslash \{ l\}} ,K_{[L]\backslash \{ l\}} )  - n - \alpha n (L-1)\\
        & \stackrel{(l)}= \sum_{l \in [L]} H(\Sigma_{[L]} ,S_l |S_{[L]\backslash \{ l\}} ,K_{[L]\backslash \{ l\}} )  - n - \alpha n (L-1)\\
    & \geq \sum_{l \in [L]} H(S_l |S_{[L]\backslash \{ l\}}, K_{[L]\backslash \{ l\}} )  - n - \alpha n (L-1)\\
    & \stackrel{(m)}= \sum_{l \in [L]} H(S_l )  - n - \alpha n (L-1)\\
    & \stackrel{(n)}=  n(L-1)(1- \alpha),
\end{align*}
where \begin{enumerate}[(a)]
    \item holds by Definition \ref{def1};
    \item holds because $K_{[L]}$ is a function of $U$;
    \item holds because conditioning reduces entropy; 
    \item holds because $X_{[L]}$ is a function of $(K_{[L]}, S_{[L]})$;
    \item holds because $\Sigma_{[L]}$ is a function of $S_{[L]}$;
    \item holds by \eqref{eqprivact2} with $\mathcal{T} = 0$;
    \item holds by \eqref{eqrec};
    \item holds by uniformity of the users' sequences;
    \item holds because conditioning reduces entropy; 
    \item holds because $X_{[l-1]}$ is a function of $(S_{[L]\backslash \{ l\}}, K_{[L]\backslash \{ l\}})$; \item holds by $\eqref{eqrec}$ since $X_{[L]}$ can be reconstructed from $(X_{l},  S_{[L]\backslash \{ l\}} ,K_{[L]\backslash \{ l\}} )$; 
    \item holds by the chain rule and because $S_l$ can be reconstructed from $(\Sigma_{[L]},S_{[L]\backslash \{ l\}})$; 
    \item holds by independence between the users' sequences and the local randomness;
    \item  holds by uniformity of the users' sequences. 
\end{enumerate}

\section{Proof of Theorem \ref{thmi}} \label{secthi}
Note that the proof of \eqref{eqth2a} and \eqref{eqth2c} follows from Theorem~\ref{thmcv}, for which the leakage symmetry \eqref{eqLsym} is not needed. Hence, it is sufficient to prove \eqref{eqth2b} under the  leakage symmetry \eqref{eqLsym}.

In the following lemma, we first give a general characterization of the leakage associated with an arbitrary set of colluding~users.

\begin{lem} \label{lem1}
For any $\mathcal{T} \subseteq [L]$, we have
\begin{align*}
    C_{|\mathcal{T}|} 
   = n(L- |\mathcal{T}| -\mathds{1}\{ \mathcal{T} \neq [L]\}) - H(S_{\mathcal{T}^c}| X_{[L]} K_{\mathcal{T}}   S_{\mathcal{T}} ).
\end{align*}
\end{lem}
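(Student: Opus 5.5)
By Property~\ref{prop1}, $C_{|\mathcal{T}|}=\mathbb{L}(\mathcal{T})$. It therefore suffices to expand
\[
\mathbb{L}(\mathcal{T})=I(S_{[L]};X_{[L]}|\Sigma_{[L]},S_{\mathcal{T}},K_{\mathcal{T}})
\]
as a difference of conditional entropies and evaluate each term. Write
\[
\mathbb{L}(\mathcal{T}) = H(S_{[L]}|\Sigma_{[L]},S_{\mathcal{T}},K_{\mathcal{T}}) - H(S_{[L]}|X_{[L]},\Sigma_{[L]},S_{\mathcal{T}},K_{\mathcal{T}}).
\]
Since $S_{\mathcal{T}}$ is in the conditioning, both entropies reduce to entropies of $S_{\mathcal{T}^c}$.

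\emph{First term.} The global randomness $U$, and hence $K_{\mathcal{T}}$, is independent of $S_{[L]}$. So
\[
H(S_{\mathcal{T}^c}|\Sigma_{[L]},S_{\mathcal{T}},K_{\mathcal{T}})=H(S_{\mathcal{T}^c}|\Sigma_{[L]},S_{\mathcal{T}}).
\]
Given $S_{\mathcal{T}}$, knowing $\Sigma_{[L]}$ is equivalent to knowing $\Sigma_{\mathcal{T}^c}\triangleq\sum_{j\in\mathcal{T}^c}S_j$. Since the $S_j$ are independent and uniform over $\mathbb{F}_2^n$, we get $H(S_{\mathcal{T}^c}|\Sigma_{\mathcal{T}^c})=n|\mathcal{T}^c|-n$ when $\mathcal{T}^c\neq\emptyset$, because $\Sigma_{\mathcal{T}^c}$ is a deterministic, uniform function of $S_{\mathcal{T}^c}$. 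When $\mathcal{T}=[L]$ the term is $0$. Both cases combine into $n(L-|\mathcal{T}|-\mathds{1}\{\mathcal{T}\neq[L]\})$.

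\emph{Second term.} By the recoverability condition \eqref{eqrec}, $\Sigma_{[L]}=d(X_{[L]})$ is a function of $X_{[L]}$, so it can be dropped from the conditioning. This gives $H(S_{\mathcal{T}^c}|X_{[L]},K_{\mathcal{T}},S_{\mathcal{T}})$, and subtracting it from the first term yields the claimed identity.

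The only mildly delicate point is the first term. One must verify carefully that the uniformity and independence of the $S_j$ give exactly $n(|\mathcal{T}^c|-1)$ when $|\mathcal{T}^c|\ge1$. One must also treat the edge case $\mathcal{T}=[L]$ separately, since there the sum carries no new information and the indicator handles the discrepancy.
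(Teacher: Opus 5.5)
Your proposal is correct and follows essentially the same route as the paper's proof in Appendix~\ref{applem1}. Both proofs expand $\mathbb{L}(\mathcal{T})$ into a difference of conditional entropies, remove $K_{\mathcal{T}}$ using the independence of $U$ and $S_{[L]}$ (the paper does this via the chain rule on the mutual information before expanding), reduce the first term to $H(S_{\mathcal{T}^c})-H(\Sigma_{\mathcal{T}^c})$ (which also drops $S_{\mathcal{T}}$ by independence of $S_{\mathcal{T}}$ and $S_{\mathcal{T}^c}$, a step you use only implicitly), and drop $\Sigma_{[L]}$ from the second term via \eqref{eqrec}.
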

\begin{proof}
See Appendix \ref{applem1}.
\end{proof}

In the following lemma, we provide a characterization of the difference between the leakages associated with any two sets of colluding users that differ in size by only one member.

\begin{lem} \label{lem2}
Fix $l\in[L]$. For $t\in \llbracket 0,L-1 \rrbracket$, define 
$$ 
\mathcal{T}_t \triangleq \begin{cases} \llbracket 1,t \rrbracket & \text{ if } t<l \\ \llbracket 1,t +1\rrbracket \backslash \{l\} & \text{ if } t \geq l \end{cases},
$$
and $\mathcal{T}_{L} \triangleq [L]$. 

Then, for $t \in \llbracket 0, L-1 \rrbracket$, we have
\begin{align*}
    &C_{t+1} - C_t\\
      &=H(K_l ,S_l | K_{\mathcal{T}_t},S_{\mathcal{T}_t},X_{[L]})- H(K_l | K_{\mathcal{T}_t},S_{[L]},X_{[L]}) \\
      &\phantom{-}- n \mathds{1}\{t\neq L\!-\!1\}. \numberthis\label{eqlem3}
\end{align*}
\end{lem}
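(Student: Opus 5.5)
Applying Lemma \ref{lem1} to $\mathcal{T}_t$ and to $\mathcal{T}_{t+1}$ and subtracting will give the result, once we check one structural fact. By construction $l\notin\mathcal{T}_t$ for $t\le L-1$. Moreover $\mathcal{T}_{t+1}=\mathcal{T}_t\cup\{l\}$ up to relabeling of the other users, since both have the right cardinalities. For $t<l-1$, however, $\mathcal{T}_{t+1}=\llbracket 1,t+1\rrbracket$, which is not literally $\mathcal{T}_t\cup\{l\}$. This is harmless: by Property \ref{prop1}, $C_{t+1}=\mathbb{L}(\mathcal{T})$ for any $\mathcal{T}$ of size $t+1$. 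So we may directly write $C_{t+1}=\mathbb{L}(\mathcal{T}_t\cup\{l\})$ and $C_t=\mathbb{L}(\mathcal{T}_t)$. Applying Lemma \ref{lem1} to both sets, and noting $\mathds{1}\{\mathcal{T}_t\neq[L]\}=1$ while $\mathds{1}\{\mathcal{T}_t\cup\{l\}\neq[L]\}=\mathds{1}\{t\neq L-1\}$, we get
\begin{align*}
C_{t+1}-C_t = -n + n - n\mathds{1}\{t\neq L-1\} + H(S_{\mathcal{T}_t^c}| X_{[L]} K_{\mathcal{T}_t} S_{\mathcal{T}_t}) - H(S_{\mathcal{T}_t^c\setminus\{l\}}| X_{[L]} K_{\mathcal{T}_t} K_l S_{\mathcal{T}_t} S_l).
\end{align*}
Here the first $-n$ comes from the size increase and the $+n$ from the indicator for $\mathcal{T}_t$.

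It remains to match the two entropy terms with the right-hand side of \eqref{eqlem3}. Write $A\triangleq(X_{[L]},K_{\mathcal{T}_t},S_{\mathcal{T}_t})$ and $B\triangleq S_{\mathcal{T}_t^c\setminus\{l\}}$. By the chain rule, $H(S_l,B|A)+H(K_l|S_l,B,A)=H(K_l,S_l|A)+H(B|K_l,S_l,A)$. Hence
\[
H(S_{\mathcal{T}_t^c}|A)-H(B|A,K_l,S_l)=H(K_l,S_l|A)-H(K_l|A,S_{[L]}),
\]
using $(S_l,B,S_{\mathcal{T}_t})=S_{[L]}$. This is exactly the right-hand side of \eqref{eqlem3} once the indicator term is restored.

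The only delicate step is the set bookkeeping: checking that $\mathcal{T}_t$ has size $t$ and excludes $l$, including the edge case $t=L-1$ where $\mathcal{T}_t\cup\{l\}=[L]$. It also requires that Lemma \ref{lem1} and Property \ref{prop1} apply for all sizes up to $L$. Everything else is the chain rule.
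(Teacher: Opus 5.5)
Your proof is correct and follows the paper's proof: both apply Lemma~\ref{lem1} to $\mathcal{T}_t$ and to $\mathcal{T}_t\cup\{l\}$ (not to $\mathcal{T}_{t+1}$) and finish with the same chain-rule identity. The only difference is that you handle $t=L-1$ through the indicator in Lemma~\ref{lem1} applied at $\mathcal{T}=[L]$, whereas the paper treats that case separately using $C_{L-1}=0=C_L$. Your side remark that $\mathcal{T}_{t+1}$ differs from $\mathcal{T}_t\cup\{l\}$ only when $t<l-1$ is inaccurate: for every $t\le L-2$ one has $l\notin\mathcal{T}_{t+1}$, so the two sets never coincide. Since you never use $\mathcal{T}_{t+1}$, this does no harm.
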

\begin{proof}
For $t \in \llbracket 0, L-2 \rrbracket$, we have
\begin{align*}
    &  H(K_l ,S_l | K_{\mathcal{T}_t},S_{\mathcal{T}_t},X_{[L]}) \\
    & \stackrel{(a)}=H(K_l ,S_l,S_{\mathcal{T}_t^c}| K_{\mathcal{T}_t},S_{\mathcal{T}_t},X_{[L]}) \\
&\phantom{--}- H( S_{\mathcal{T}_t^c}|  K_{\mathcal{T}_t\cup\{l\}},S_{\mathcal{T}_t\cup\{l\}},X_{[L]}) \\
    & \stackrel{(b)}=H(S_{\mathcal{T}_t^c}| K_{\mathcal{T}_t},S_{\mathcal{T}_t},X_{[L]})+H(K_l,S_l |K_{\mathcal{T}_t},S_{[L]},X_{[L]}) \\
&\phantom{--}- H( S_{\mathcal{T}_t^c}|  K_{\mathcal{T}_t\cup\{l\}},S_{\mathcal{T}_t\cup\{l\}},X_{[L]}) \\
    & \stackrel{(c)}  = n(L-t-1)- C_t+H(K_l,S_l |  K_{\mathcal{T}_t},S_{[L]},X_{[L]}) \\
&\phantom{--}- H( S_{\mathcal{T}_t^c}|  K_{\mathcal{T}_t\cup\{l\}},S_{\mathcal{T}_t\cup\{l\}},X_{[L]}) \\
    &   \stackrel{(d)}= n(L-t-1)- C_t+H(K_l | K_{\mathcal{T}_t},S_{[L]},X_{[L]}) \\
&\phantom{--}- H( S_{\mathcal{T}_t^c \backslash \{l\}}|K_{\mathcal{T}_t\cup\{l\}},S_{\mathcal{T}_t\cup\{l\}},X_{[L]}) \\
    & \stackrel{(e)} = n(L-t-1)- C_t+H(K_l | K_{\mathcal{T}_t},S_{[L]},X_{[L]}) + C_{t+1} \\
&\phantom{--}- n(L-(t+1)-1) \\
    & = n + C_{t+1} - C_t+H(K_l | K_{\mathcal{T}_t},S_{[L]},X_{[L]})   , \numberthis\label{eqlem3a}
\end{align*}
where 
\begin{enumerate}[(a)]
    \item\!\!\!, (b), and (d) hold by the chain rule;
    \addtocounter{enumi}{1}
    \item and (e) hold by Lemma \ref{lem1}. 
\end{enumerate}

Moreover, we have
\begin{align*}
  &H(K_l ,S_l | K_{\mathcal{T}_{L-1}},S_{\mathcal{T}_{L-1}},X_{[L]}) \\
    & \stackrel{(a)}=H(K_l , S_l| K_{\mathcal{T}_{L-1}},S_{[L]},X_{[L]})\\
        & =H(K_l  | K_{\mathcal{T}_{L-1}},S_{[L]},X_{[L]})\\
    & \stackrel{(b)}=  C_{L} - C_{L-1}+ H(K_l  | K_{\mathcal{T}_{L-1}},S_{[L]},X_{[L]})   , \numberthis\label{eqlem3b}
\end{align*}
where \begin{enumerate}[(a)]
    \item holds because $S_{[L]}$ can be reconstructed from $X_{[L]}$ by \eqref{eqrec};
    \item holds because $C_{L-1}=0=C_{L}$.
\end{enumerate} 
Hence, \eqref{eqlem3} holds by \eqref{eqlem3a} and \eqref{eqlem3b}.
\end{proof}
Then, we lower bound the individual local randomness of User $l\in[L]$ as follows:
\begin{align*}
&H(K_l) \\
& \stackrel{(a)}\geq H(K_l)+ n -H(X_l) \\
& \stackrel{(b)}=H(K_l)+ H( S_l ) -H(X_l) \\
& \stackrel{(c)}=H(K_l ,S_l ) -H(X_l) \\
& \stackrel{(d)}=H(K_l ,S_l ,X_l) -H(X_l) \\
& = H(K_l, S_l|X_l) \\
& \stackrel{(e)}\geq   H(K_l ,S_l|K_{\mathcal{T}_0},S_{\mathcal{T}_0},X_{[L]}) \\
& \geq   H(K_l ,S_l|K_{\mathcal{T}_0},S_{\mathcal{T}_0},X_{[L]})   \\
& \phantom{--}- H(K_l ,S_l|K_{\mathcal{T}_{L-1}},S_{\mathcal{T}_{L-1}},X_{[L]})\\
    & = \sum_{i=0}^{L-2}   [H(K_l,S_l|K_{\mathcal{T}_i},S_{\mathcal{T}_i},X_{[L]}) \\
& \phantom{----}- H(K_l,S_l|K_{\mathcal{T}_{i+1}},S_{\mathcal{T}_{i+1}},X_{[L]})]\\
   &\stackrel{(f)}= \sum_{i=0}^{L-3} [C_{i+1} - C_i -C_{i+2} + C_{i+1} +H(K_l | K_{\mathcal{T}_i},S_{[L]},X_{[L]})\\
&\phantom{--} - H(K_l | K_{\mathcal{T}_{i+1}},S_{[L]},X_{[L]})]\\
   & \phantom{--}+ [n+C_{L-1} - C_{L-2}  +H(K_l | K_{\mathcal{T}_{L-2}},S_{[L]},X_{[L]})  \\
&\phantom{---}- C_{L}+ C_{L-1}- H(K_l  | K_{\mathcal{T}_{L-1}},S_{[L]},X_{[L]}) ]\\
      & \stackrel{(g)}\geq n+ \sum_{i=0}^{L-2} [C_{i+1} - C_i -C_{i+2} + C_{i+1} ] \\
  & = n+  C_1- C_0 - C_{L} + C_{L-1}\\
   & \stackrel{(h)}= C_1 - C_0 +n\\
   & \stackrel{(i)}\geq n(1-\alpha),
\end{align*}
where \begin{enumerate}[(a)]
    \item holds by \eqref{eqth21};
    \item holds by uniformity of the users' sequences;
    \item holds by independence between the users' sequences and the local randomness;
    \item holds by Definition~\ref{def1};
    \item holds because conditioning reduces entropy; 
    \item holds by Lemma \ref{lem2}; 
    \item holds because, for any $i \in \llbracket 0, L-2 \rrbracket$, \begin{align*}    
    H(K_l | K_{\mathcal{T}_i},S_{[L]},X_{[L]}) \geq H(K_l | K_{\mathcal{T}_{i+1}},S_{[L]},X_{[L]})\end{align*}  since conditioning reduces entropy; 
    \item follows from $C_{L-1}=0=C_{L}$;
    \item holds by~Property \ref{prop3}.
    \end{enumerate}

\section{Proof of Theorem \ref{thmach}} \label{secach}
To prove Theorem \ref{thmach}, it is sufficient to prove the following achievability result, which proves that the converse bounds of Section \ref{secconverse} are tight.
\begin{prop}[Achievability]
For any $\alpha \in [0,1]\cap \mathbb{Q}$, there exists an $(L,n,(R^{(X)}_l)_{l\in [L]},R^{(U)},(R^{(K)}_l)_{l\in [L]})$ private-sum computation protocols that is $(\delta= \alpha (L-1),T)$-private and such that
\begin{align*}
    R^{(K)}_l & = (1- \alpha), \forall l\in [L],\\
    R^{(X)}_l & = 1, \forall l\in [L],\\
    R^{(U)} & = (1-\alpha) (L-1).
\end{align*} 
\end{prop}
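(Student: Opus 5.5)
The plan is a direct construction: apply the zero-leakage scheme of Example~\ref{ex1} to a $(1-\alpha)$-fraction of each sequence and send the remaining $\alpha$-fraction in the clear. Write $\alpha = p/q$ with $p,q\in\mathbb{N}$, $q\geq 1$, and choose $n$ to be a multiple of $q$, so that $\alpha n$ and $(1-\alpha)n$ are integers. Split each sequence as $S_l = (S_l^{a}, S_l^{b})$ with $S_l^{a}\in\mathbb{F}_2^{(1-\alpha)n}$ and $S_l^{b}\in\mathbb{F}_2^{\alpha n}$. These parts are independent and uniform, since $S_l$ is uniform over $\mathbb{F}_2^n$.

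Let the global randomness be $U\triangleq(N_l)_{l\in[L-1]}$ with $N_l$ i.i.d.\ uniform over $\mathbb{F}_2^{(1-\alpha)n}$. Set $K_l\triangleq N_l$ for $l\in[L-1]$ and $K_L\triangleq-\sum_{l\in[L-1]}N_l$. User $l$ sends $X_l\triangleq(S_l^{a}+K_l,\,S_l^{b})$, and the fusion center outputs $\big(\sum_l X_l^{a},\sum_l X_l^{b}\big)=\Sigma_{[L]}$, because $\sum_l K_l=0$. This gives \eqref{eqrec} and the rates $R^{(X)}_l=1$, $R^{(K)}_l=1-\alpha$ and $R^{(U)}=(1-\alpha)(L-1)$.

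The main step is checking \eqref{eqprivact2}. I will in fact compute the leakage exactly, as $\mathbb{L}(\mathcal{T}) = \alpha n(L-1-|\mathcal{T}|)$ for every $\mathcal{T}\subseteq[L]$ with $|\mathcal{T}|\leq L-2$. Fix such a $\mathcal{T}$. Since the $a$- and $b$-components of $(S_{[L]},K_{[L]})$ are mutually independent, the leakage splits, via the chain rule, into an $a$-part and a $b$-part.

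For the $a$-part, the conditioning on $K_{\mathcal{T}}$ leaves $K_{\mathcal{T}^c}$ uniform on the affine subspace $\{k : \sum_{j\in\mathcal{T}^c}k_j = -\sum_{j\in\mathcal{T}}K_j\}$. This subspace is nontrivial because $|\mathcal{T}^c|\geq 2$. Hence, given $S^{a}_{[L]}$ and $K_{\mathcal{T}}$, the vector $X^{a}_{\mathcal{T}^c}$ is uniform over the vectors consistent with $\Sigma^{a}_{[L]}$. By the one-time pad lemma, the $a$-part leakage is therefore zero.

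For the $b$-part, $X^{b}_{\mathcal{T}^c}=S^{b}_{\mathcal{T}^c}$, so the leakage equals $H(S^{b}_{\mathcal{T}^c}\mid \Sigma^{b}_{[L]},S^{b}_{\mathcal{T}})=\alpha n(|\mathcal{T}^c|-1)$. This is at most $\alpha n(L-1)$, which gives \eqref{eqprivact2} for every $T\leq L-2$.

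The same formula also gives \eqref{eqLsym} with $\Delta=\alpha n$, since adding one user to $\mathcal{T}$ always decreases the leakage by $\alpha n$. So the individual-rate claim of Theorem~\ref{thmach} is achieved by a protocol that satisfies the symmetry requirement. Combining this with Theorems~\ref{thmcv} and~\ref{thmi} shows that all four optimal quantities are attained simultaneously.

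The delicate part is the rigorous independence bookkeeping that justifies splitting the conditional mutual information into the $a$- and $b$-terms. I would handle it by writing $I(S_{[L]};X_{[L]}\mid\Sigma_{[L]},S_{\mathcal{T}},K_{\mathcal{T}})$ as $I(S^{a}_{[L]},S^{b}_{[L]};X^{a}_{[L]},X^{b}_{[L]}\mid\cdot)$. The conditioning variables factorize into an $a$-tuple $(\Sigma^{a}_{[L]}, S^{a}_{\mathcal{T}}, K_{\mathcal{T}})$ and a $b$-tuple $(\Sigma^{b}_{[L]}, S^{b}_{\mathcal{T}})$. The joint law of all $a$-variables is independent of that of all $b$-variables, so the mutual information is additive across the two parts.
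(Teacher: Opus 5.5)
Your proof is correct. The construction is the paper's own: a time-sharing version of the zero-leakage scheme of Example~\ref{ex1}, where a $(1-\alpha)$-fraction of each sequence is one-time padded with zero-sum keys and the remaining $\alpha$-fraction is sent in the clear.

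The difference lies in how privacy is verified.

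\emph{The paper's route.} It works with the whole vectors in a single entropy chain. It writes the leakage as $H(X_{\mathcal{T}^c}\mid\Sigma_{\mathcal{T}^c},S_{\mathcal{T}},K_{\mathcal{T}})-H(X_{\mathcal{T}^c}\mid S_{[L]},\Sigma_{\mathcal{T}^c},K_{\mathcal{T}})$. It upper bounds the first term by $H(X_{\mathcal{T}^c}\mid\sum_{l\in\mathcal{T}^c}X_l)\le(L-|\mathcal{T}|-1)n$, using that $\sum_{l\in\mathcal{T}^c}K_l$ is determined by $K_{\mathcal{T}}$. It evaluates the second term as $H(K_{\mathcal{T}^c}\mid K_{\mathcal{T}})=(L-|\mathcal{T}|-1)n(1-\alpha)$. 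As written, this yields only the upper bound $\alpha n(L-|\mathcal{T}|-1)$.

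\emph{Your route.} You split the conditional mutual information over the independent padded and clear components. You show the padded part leaks nothing, and you compute the clear part exactly. Your phrase ``uniform over the vectors consistent with $\Sigma^a_{[L]}$'' is best stated precisely: the conditional law of $X^a_{\mathcal{T}^c}$ given $(S^{a}_{[L]},K_{\mathcal{T}})$ depends only on $(\Sigma^{a}_{[L]},S^{a}_{\mathcal{T}},K_{\mathcal{T}})$. That Markov property is what gives zero leakage.

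\emph{What each route buys.} The paper's chain is shorter and avoids the independence bookkeeping. Yours gives an exact leakage formula, and hence an explicit check that the scheme satisfies \eqref{eqLsym} with $\Delta=\alpha n$. That check is what lets the individual-rate claim of Theorem~\ref{thmach} ``when \eqref{eqLsym} holds'' be read as achieved by a symmetric protocol; the paper never verifies it. Your choice of $n$ as a multiple of the denominator of $\alpha$ also covers $\alpha\in\{0,1\}$, which the paper's requirement $n_1,n_2\in\mathbb{N}^*$ excludes.

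\emph{One small fix.} The symmetry check adds a user to sets of size $L-2$, so state your formula for all $|\mathcal{T}|\le L-1$. Your argument already gives this case unchanged, because it never uses that the affine subspace is nontrivial.
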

\begin{proof}
The achievability scheme can be seen as a time-sharing version of the coding scheme in~\cite{zhao2022secure}.  Let $n_1, n_2 \in \mathbb{N}^*$ and define $n\triangleq n_1 + n_2$, $\alpha \triangleq n_1 /n$. For $l\in[L]$, the sequence $S_l$ is written as $S_l = (S_l^{(1)},S_l^{(2)})$ where $|S_l^{(1)}| = n_1$ and $|S_l^{(2)}| = n_2$. User $l\in [L]$ sends $X_l \triangleq (S^{(1)}_l,K_l + S^{(2)}_l)$ where $(K_l)_{l\in[L-1]}$ are jointly uniform sequences of $n_2$ bits, independent of $S_{[L]}$, and $K_{L} \triangleq \sum_{l\in[L-1]}K_l$. Hence, we have for any $l\in[L]$, $R^{(K)}_l  = \frac{n_2}{n} = 1- \alpha$, $R^{(X)}_l =\frac{n}{n}=1$, and $R^{(U)} = \frac{(L-1)n_2}{n}=(L-1)(1-\alpha)$.
Then, for any $\mathcal{T} \subset [L]$ such that $|\mathcal{T}|=T$, we~have
\begin{align*}
&I(S_{[L]}; X_{[L]}  |\Sigma_{[L]} ,S_{\mathcal{T}} ,K_{\mathcal{T}})\\
& \stackrel{(a)}= I(S_{\mathcal{T}^c}; X_{\mathcal{T}^c}  |\Sigma_{\mathcal{T}^c} ,S_{\mathcal{T}} ,K_{\mathcal{T}}) \\
& = H(X_{\mathcal{T}^c}|\Sigma_{\mathcal{T}^c} ,S_{\mathcal{T}}, K_{\mathcal{T}}) - H(X_{\mathcal{T}^c} | S_{ [L]},   \Sigma_{\mathcal{T}^c} ,  K_{\mathcal{T}})  \\
& \stackrel{(b)} = H(X_{\mathcal{T}^c}|\textstyle\sum_{l\in \mathcal{T}^c} X_l, \Sigma_{\mathcal{T}^c} ,S_{\mathcal{T}} ,K_{\mathcal{T}})\\
& \phantom{--} - H(X_{\mathcal{T}^c} | S_{ [L]},   \Sigma_{\mathcal{T}^c} ,  K_{\mathcal{T}})  \\
& \stackrel{(c)}\leq H(X_{\mathcal{T}^c}| \textstyle\sum_{l\in \mathcal{T}^c} X_l ) - H(X_{\mathcal{T}^c} | S_{ [L]} ,  \Sigma_{\mathcal{T}^c} ,  K_{\mathcal{T}}) \\
& \stackrel{(d)}=H(X_{\mathcal{T}^c}| \textstyle\sum_{l\in \mathcal{T}^c} X_l  ) - H(K_{\mathcal{T}^c} |   S_{ [L]} , \Sigma_{\mathcal{T}^c},   K_{\mathcal{T}})  \\
& \stackrel{(e)}=H(X_{\mathcal{T}^c}| \textstyle\sum_{l\in \mathcal{T}^c} X_l ) - H(K_{\mathcal{T}^c} | K_{\mathcal{T}})  \\
& =  (L - |\mathcal{T}| -1) n - (L - |\mathcal{T}| -1) n (1-\alpha) \\
& =  (L - |\mathcal{T}| -1) n \alpha \\
& \leq (L -1) n \alpha,
\end{align*}
where \begin{enumerate}[(a)]
\item holds by the chain rule because $X_{\mathcal{T}}$ can be recovered from $(S_{\mathcal{T}},K_{\mathcal{T}})$, and there is a one-to-one mapping between $(\Sigma_{\mathcal{T}^c} ,S_{\mathcal{T}})$ and $(\Sigma_{[L]}, S_{\mathcal{T}})$; 
    \item holds because $\sum_{ \in \mathcal{T}^c}K_l $ can be computed from $K_{\mathcal{T}}$ as $\sum_{ \in \mathcal{T}^c}K_l  = \sum_{ \in \mathcal{T}}K_l$ by definition of $K_L$; \item  holds because conditioning reduces entropy; \item holds because \begin{align*}
        &H(X_{\mathcal{T}^c} | S_{ [L]}   ,\Sigma_{\mathcal{T}^c}   ,K_{\mathcal{T}})\\&= H(X_{\mathcal{T}^c} ,S_{ [L]}| S_{ [L]}   ,\Sigma_{\mathcal{T}^c}   ,K_{\mathcal{T}})\\&= H(K_{\mathcal{T}^c} ,S_{ [L]}| S_{ [L]}   ,\Sigma_{\mathcal{T}^c}   ,K_{\mathcal{T}})\\&= H(K_{\mathcal{T}^c} | S_{ [L]},   \Sigma_{\mathcal{T}^c}   ,K_{\mathcal{T}})\end{align*} by the definition of $X_l$, $l\in[L]$; \item  holds by independence between $K_{[L]}$ and $S_{[L]}$.
    \end{enumerate}
\end{proof}

\section{Proof of Theorem \ref{th4}} \label{appproofth4}
We start by deriving a lower bound on the minimum amount of joint communication from any set of users in the following~lemma.
\begin{lem} \label{lemcc}
For $\mathcal{T} \subseteq [L]$, we have
\begin{align*}
    H(X_{\mathcal{T}})
    \geq n|\mathcal{T}|.
\end{align*}
\end{lem}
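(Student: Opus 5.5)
We prove Lemma \ref{lemcc} by generalizing the argument used for \eqref{eqth21}: we give the colluding side all side information outside $\mathcal{T}$ and then use the recoverability condition \eqref{eqrec}. The case $\mathcal{T}=\emptyset$ is trivial, so fix a nonempty $\mathcal{T}\subseteq[L]$.

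\emph{Step 1: condition on everything outside $\mathcal{T}$.} Since conditioning reduces entropy,
\begin{align*}
H(X_{\mathcal{T}}) \geq H(X_{\mathcal{T}}\mid S_{\mathcal{T}^c},K_{\mathcal{T}^c},K_{\mathcal{T}}).
\end{align*}
Given $K_{\mathcal{T}}$, the sequence $X_{\mathcal{T}}$ is a function of $S_{\mathcal{T}}$. We want to show that, conditioned on $(S_{\mathcal{T}^c},K_{[L]})$, the map $S_{\mathcal{T}}\mapsto X_{\mathcal{T}}$ is injective. Then the right-hand side equals $H(S_{\mathcal{T}}\mid S_{\mathcal{T}^c},K_{[L]})$.

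\emph{Step 2: injectivity.} Consider two realizations $s_{\mathcal{T}}\neq s'_{\mathcal{T}}$ with the same $(S_{\mathcal{T}^c},K_{[L]})$. We need them to produce different $X_{\mathcal{T}}$. This is the step we expect to need care, since \eqref{eqrec} only says that the sum is recoverable, not $S_{\mathcal{T}}$ itself. We handle it by changing one coordinate at a time.

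Pick $j\in\mathcal{T}$ where $s_j\neq s'_j$. Keep every other user's input fixed, so $X_{[L]\setminus\{j\}}$ is unchanged. By \eqref{eqrec}, $\Sigma_{[L]}=d(X_{[L]})$, and changing only $s_j$ changes $\Sigma_{[L]}$. Hence $X_j$ must change. So for fixed $K_j$, the encoder $e^{(X)}_j(K_j,\cdot)$ is injective.

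Because each $X_l$ depends only on $(K_l,S_l)$, injectivity of each coordinate map gives injectivity of $S_{\mathcal{T}}\mapsto X_{\mathcal{T}}$ for fixed $K_{\mathcal{T}}$. This uses that every $(K_{[L]},S_{[L]})$ pair has positive probability, which holds because $S_{[L]}$ is uniform and independent of $U$. So correctness holds pointwise on the support.

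\emph{Step 3: conclude.} From Steps 1 and 2,
\begin{align*}
H(X_{\mathcal{T}}) \geq H(S_{\mathcal{T}}\mid S_{\mathcal{T}^c},K_{[L]}) = H(S_{\mathcal{T}}) = n|\mathcal{T}|,
\end{align*}
where the first equality holds by independence of $S_{[L]}$ and $U$ and of the users' sequences, and the second by uniformity.

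An alternative is a chain-rule argument, $H(X_{\mathcal{T}})\geq\sum_{l\in\mathcal{T}}H(X_l\mid X_{\mathcal{T}\cap[l-1]},S_{[L]\setminus\{l\}},K_{[L]\setminus\{l\}})$, where each term is bounded by $n$ exactly as in the proof of \eqref{eqth21}. This works because $X_{\mathcal{T}\cap[l-1]}$ is a function of $(S_{[L]\setminus\{l\}},K_{[L]\setminus\{l\}})$, so the extra conditioning can be dropped. This route avoids pointwise arguments altogether, and it is the one I would write up. Its only delicate step is checking that this conditioning removal is valid.
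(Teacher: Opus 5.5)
Both of your arguments are correct. The chain-rule route you say you would write up is essentially the paper's proof. The paper peels off one user $t\in\mathcal{T}$ at a time, showing $H(X_{\mathcal{T}}\mid S_{\mathcal{T}^c},K_{\mathcal{T}^c})\geq H(X_{\mathcal{T}\setminus\{t\}}\mid S_{(\mathcal{T}\setminus\{t\})^c},K_{(\mathcal{T}\setminus\{t\})^c})+n$ via $I(X_{\mathcal{T}};\Sigma_{[L]},X_{\mathcal{T}\setminus\{t\}}\mid S_{\mathcal{T}^c},K_{\mathcal{T}^c})$ and recoverability of $\Sigma_{[L]}$. That is the same decomposition as your $\sum_{l\in\mathcal{T}}H(X_l\mid S_{[L]\setminus\{l\}},K_{[L]\setminus\{l\}})$, with each term bounded by $n$ as in \eqref{eqth21}; the paper also uses exactly these steps for $\mathcal{T}=[L]$ in the proof of \eqref{eqth23}.

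Your main, injectivity-based route is genuinely different. It is a pointwise argument showing that $e^{(X)}_j(k_j,\cdot)$ is injective for every $k_j$ in the support of $K_j$. This gives the sharper structural fact $H(X_{\mathcal{T}}\mid K_{\mathcal{T}})=n|\mathcal{T}|$, from which the lemma follows at once. Its price is that it needs \eqref{eqrec} to hold for every realization. You justify this correctly by finiteness and full support of $S_{[L]}$ given any key realization, but strictly only pairs with $k_{[L]}$ in the support of $K_{[L]}$ have positive probability, not every pair; that is all you actually use. The entropic route needs only $H(\Sigma_{[L]}\mid X_{[L]})=0$, so it would carry over via Fano's inequality to a vanishing-error variant, where the injectivity argument breaks down.

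One expository point: in Step 2, comparing $s_{\mathcal{T}}$ and $s'_{\mathcal{T}}$ ``keeping every other input fixed'' is muddled, since those realizations may differ in several coordinates. The logic works only because you first prove single-coordinate injectivity with an auxiliary pair and then invoke the product structure, so write it in that order.
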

\begin{proof}
    For $\mathcal{T} \subseteq [L]$ such that $|\mathcal{T}|\geq 1$ and $t \in \mathcal{T}$, we have
\begin{align*}
    &H(X_{\mathcal{T}})\\
    &\stackrel{(a)}\geq H(X_{\mathcal{T}}| S_{[L]\backslash \mathcal{T}},K_{[L]\backslash \mathcal{T}}) \numberthis \label{eqi1} \\
    & \geq I(X_{\mathcal{T}}; \Sigma_{[L]},X_{\mathcal{T}\backslash\{t\}}| S_{[L]\backslash \mathcal{T}},K_{[L]\backslash \mathcal{T}}) \\
    & =  H( \Sigma_{[L]},X_{\mathcal{T}\backslash\{t\}}| S_{[L]\backslash \mathcal{T}},K_{[L]\backslash \mathcal{T}}) \\
&\phantom{--}- H( \Sigma_{[L]},X_{\mathcal{T}\backslash\{t\}}| X_{\mathcal{T}}, S_{[L]\backslash \mathcal{T}},K_{[L]\backslash \mathcal{T}})\\
    & \stackrel{(b)} =  H( \Sigma_{[L]},X_{\mathcal{T}\backslash\{t\}}| S_{[L]\backslash \mathcal{T}},K_{[L]\backslash \mathcal{T}})\\
    &  =  H( X_{\mathcal{T}\backslash\{t\}}| S_{[L]\backslash \mathcal{T}},K_{[L]\backslash \mathcal{T}}) \\
&\phantom{--}+ H( \Sigma_{[L]}|X_{\mathcal{T}\backslash\{t\}} ,S_{[L]\backslash \mathcal{T}},K_{[L]\backslash \mathcal{T}}) \\
    & \stackrel{(c)} \geq   H( X_{\mathcal{T}\backslash\{t\}}| S_{[L]\backslash (\mathcal{T}\backslash\{t\})},K_{[L]\backslash (\mathcal{T}\backslash\{t\})}) \\
&\phantom{--}+ H( \Sigma_{[L]}|X_{\mathcal{T}\backslash\{t\}},S_{[L]\backslash\{t\}} ,K_{[L]\backslash\{t\}} ) \\
    & \stackrel{(d)} =   H( X_{\mathcal{T}\backslash\{t\}}| S_{[L]\backslash (\mathcal{T}\backslash\{t\})},K_{[L]\backslash (\mathcal{T}\backslash\{t\})}) \\
&\phantom{--}+ H( \Sigma_{[L]}|S_{[L]\backslash\{t\}} ,K_{[L]\backslash\{t\}} ) \\
    & \stackrel{(e)} \geq    H( X_{\mathcal{T}\backslash\{t\}}| S_{[L]\backslash (\mathcal{T}\backslash\{t\})},K_{[L]\backslash (\mathcal{T}\backslash\{t\})}) + n \numberthis \label{eqi2}\\
    & \stackrel{(f)} \geq n|\mathcal{T}|, 
\end{align*}
where \begin{enumerate}[(a)]
    \item holds because conditioning reduces entropy; \item holds because \begin{align*}
        &H( \Sigma_{[L]},X_{\mathcal{T}\backslash\{t\}}| X_{\mathcal{T}} ,S_{[L]\backslash \mathcal{T}},K_{[L]\backslash \mathcal{T}})\\&= H( \Sigma_{[L]} | X_{\mathcal{T}} ,S_{[L]\backslash \mathcal{T}},K_{[L]\backslash \mathcal{T}})\\& \leq H( \Sigma_{[L]} | X_{[L]}) \\&= 0,\end{align*} where the inequality holds because $X_{[L]\backslash \mathcal{T}}$ is a function of $(S_{[L]\backslash \mathcal{T}},K_{[L]\backslash \mathcal{T}})$ and the equality holds by \eqref{eqrec}; \item holds because conditioning reduces entropy; \item holds because $X_{\mathcal{T}\backslash\{t\}}$ can be recovered from $(S_{\mathcal{T}\backslash\{t\}},K_{\mathcal{T}\backslash\{t\}})$; \item holds because \begin{align*}
        &H( \Sigma_{[L]}|S_{[L]\backslash\{t\}} ,K_{[L]\backslash\{t\}} )\\&=H( \Sigma_{[L]},S_t|S_{[L]\backslash\{t\}} ,K_{[L]\backslash\{t\}} )\\
        &\phantom{--}-H( S_t|\Sigma_{[L]},S_{[L]\backslash\{t\}} ,K_{[L]\backslash\{t\}} )\\&=H( \Sigma_{[L]},S_t|S_{[L]\backslash\{t\}} ,K_{[L]\backslash\{t\}} )\\& = H( S_t|S_{[L]\backslash\{t\}} ,K_{[L]\backslash\{t\}} )\\&= H( S_t)\\&=n,\end{align*} where the second equality holds because $S_t$ can be recovered from $(S_{[L]\backslash\{t\}},\Sigma_{[L]})$, the third equality holds by the chain rule and because $\Sigma_{[L]}$ can be recovered from $(S_{[L]\backslash\{t\}},S_t)$, the fourth  equality holds by independence between the users' sequences and the local randomness, and the last equality holds by uniformity of the users' sequences; \item holds by  repeating $|\mathcal{T}|-1$ times the steps between \eqref{eqi1} and \eqref{eqi2}. 
    \end{enumerate}
\end{proof}
We then use Lemma \ref{lemcc} to prove a lower bound on the joint amount of required local randomness at any set of users in the following~lemma.
\begin{lem} \label{lem4}
For any $\mathcal{T} \subseteq [L]$, we have
    \begin{align*}
    H(K_{\mathcal{T}})
   \geq  n (|\mathcal{T}|-\mathds{1}\{\mathcal{T} = [L]\}) (1 - \alpha). 
    \end{align*}
\end{lem}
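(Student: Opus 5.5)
The plan is to split into the two cases $\mathcal{T}\neq[L]$ and $\mathcal{T}=[L]$. In both cases we lower bound $H(K_{\mathcal{T}})$ by the conditional entropy of the transmitted messages, mirroring the proofs of \eqref{eqth22} and \eqref{eqth23}. Lemma~\ref{lemcc} replaces \eqref{eqth21} as the lower bound on the joint communication.

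\emph{Case $\mathcal{T}=[L]$.} The chain in the proof of \eqref{eqth23} never uses anything specific to $U$ after step (b). It starts from $H(K_{[L]})$ and ends at $n(L-1)(1-\alpha)$. So we simply reuse it verbatim to get $H(K_{[L]})\geq n(L-1)(1-\alpha)$, which is the claim since $\mathds{1}\{\mathcal{T}=[L]\}=1$.

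\emph{Case $\mathcal{T}\subsetneq[L]$.} We write
\begin{align*}
H(K_{\mathcal{T}}) &= H(K_{\mathcal{T}}|S_{\mathcal{T}}) \geq I(K_{\mathcal{T}};X_{\mathcal{T}}|S_{\mathcal{T}}) = H(X_{\mathcal{T}}|S_{\mathcal{T}}) \\
&= H(X_{\mathcal{T}}) - I(X_{\mathcal{T}};S_{\mathcal{T}}) \geq n|\mathcal{T}| - I(X_{\mathcal{T}};S_{\mathcal{T}}).
\end{align*}
These steps use, in order: independence of $U$ and $S_{[L]}$; that $X_{\mathcal{T}}$ is a function of $(K_{\mathcal{T}},S_{\mathcal{T}})$; and Lemma~\ref{lemcc}. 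It then remains to show $I(X_{\mathcal{T}};S_{\mathcal{T}})\leq \alpha n|\mathcal{T}|$.

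To bound $I(X_{\mathcal{T}};S_{\mathcal{T}})$, we first observe that $\Sigma_{[L]}$ is independent of $(S_{\mathcal{T}},K_{\mathcal{T}})$ when $\mathcal{T}^c\neq\emptyset$, by the one-time pad lemma applied to $\Sigma_{\mathcal{T}^c}$. Hence $I(X_{\mathcal{T}};S_{\mathcal{T}})=I(X_{\mathcal{T}};S_{\mathcal{T}}|\Sigma_{[L]})$. We then redo the chain of \eqref{eql2} with $i\notin\mathcal{T}$ and with the users of $\mathcal{T}$ ordered first. The difference is that in step (d) we keep $X_{\mathcal{T}\cap[l]}$ rather than only $X_l$. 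Carrying steps (e)--(j) through in the same way, this yields, for each $i\notin\mathcal{T}$, a bound of the form
\begin{align*}
\alpha n(L-1)\geq I(S_{\mathcal{T}};X_{\mathcal{T}}) + \sum_{l\notin\mathcal{T}\cup\{i\}} I(S_l;X_l).
\end{align*}
We would then average over $i$ in the spirit of \eqref{eql1}, combined with \eqref{eql3}, to distribute the total leakage budget $\alpha n(L-1)$ proportionally to $|\mathcal{T}|$.

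\emph{Expected obstacle.} The averaging step is where I expect difficulty. The direct chain only gives $I(S_{\mathcal{T}};X_{\mathcal{T}})\leq\alpha n(L-1)$, which suffices only when $|\mathcal{T}|=L-1$. To get the factor $|\mathcal{T}|$ in general, the first attempt will be to use the leakage symmetry \eqref{eqLsym}, which is in force in Theorem~\ref{th4}. Through Property~\ref{prop1} and Lemma~\ref{lem1}, symmetry makes the leakage attributable to any set of users depend only on its size. That should make $I(S_{\mathcal{T}};X_{\mathcal{T}})$ a function of $|\mathcal{T}|$ alone and superadditive-compatible with \eqref{eql3}, giving $I(S_{\mathcal{T}};X_{\mathcal{T}})\leq \alpha n|\mathcal{T}|$.

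If this fails, the fallback is to bound $H(K_{\mathcal{T}})$ by a chain rule over $l\in\mathcal{T}$, namely $H(K_{\mathcal{T}})\geq\sum_{l\in\mathcal{T}} H(K_l|K_{\mathcal{T}\cap[l-1]})$. Each term would then be lower bounded as in the proof of \eqref{eqth2b}, with the earlier colluders' randomness $K_{\mathcal{T}\cap[l-1]}$ placed in the conditioning via Lemma~\ref{lem2}.
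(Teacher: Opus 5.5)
Your reduction is correct: $H(K_{\mathcal{T}})\geq H(X_{\mathcal{T}}|S_{\mathcal{T}})\geq n|\mathcal{T}|-I(S_{\mathcal{T}};X_{\mathcal{T}})$. Your $\mathcal{T}=[L]$ case is also correct, reusing the chain of \eqref{eqth23} from $H(K_{[L]})$ onward. The paper instead uses $H(K_{[L]})\geq H(K_{[L-1]})$ together with its $|\mathcal{T}|=L-1$ case.

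\textbf{The gap.} The decisive inequality $I(S_{\mathcal{T}};X_{\mathcal{T}})\leq\alpha n|\mathcal{T}|$ for $1\leq|\mathcal{T}|\leq L-2$ is never proved, and the mechanism you suggest for it cannot work.
\begin{itemize}
\item Leakage symmetry \eqref{eqLsym} constrains $\mathbb{L}(\cdot)$, not $I(S_{\mathcal{T}};X_{\mathcal{T}})$. There is no reason for the latter to depend only on $|\mathcal{T}|$.
\item Even granting that, \eqref{eql3} controls only singletons. Independence of the sequences gives $I(S_{\mathcal{T}};X_{\mathcal{T}})=\sum_{l\in\mathcal{T}}I(S_l;X_{\mathcal{T}},S_{\mathcal{T}\cap[l-1]})\geq\sum_{l\in\mathcal{T}}I(S_l;X_l)$. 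So joint leakage dominates the sum of individual leakages. For example, $K_1=K_2$ makes each $X_l$ useless alone while $X_1+X_2=S_1+S_2$.
\item Hence per-user bounds can never be assembled into the bound you need. Averaging chains of type \eqref{eql2} only returns $\alpha n(L-1)$, as you observed.
\end{itemize}

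\textbf{The missing idea.} Condition on all transmissions together with the knowledge of further colluders, so that Lemma~\ref{lem1} converts the entropy into a leakage term. Take $\mathcal{A}\subseteq\mathcal{T}^c$ with $|\mathcal{A}|=L-|\mathcal{T}|-1$. The single sequence indexed by $\mathcal{T}^c\setminus\mathcal{A}$ is a function of $(\Sigma_{[L]},S_{\mathcal{T}\cup\mathcal{A}})$, so Lemma~\ref{lem1} applied to $\mathcal{A}$ gives $H(S_{\mathcal{T}}|X_{\mathcal{T}})\geq H(S_{\mathcal{T}}|X_{[L]},K_{\mathcal{A}},S_{\mathcal{A}})=n|\mathcal{T}|-C_{L-1-|\mathcal{T}|}$. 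By \eqref{eqLsym} and $C_{L-1}=0$, we have $C_{L-1-|\mathcal{T}|}=|\mathcal{T}|\Delta$, and Property~\ref{prop3} gives $|\mathcal{T}|\Delta\leq\alpha n|\mathcal{T}|$. This is where symmetry genuinely enters. The paper does the analogous step directly on $H(K_{\mathcal{T}},S_{\mathcal{T}}|X_{\mathcal{T}})$: it conditions on $X_{[L]}$ and a single outside user $\mathcal{A}_1$, obtaining $n|\mathcal{T}|-(C_1-C_{1+|\mathcal{T}|})$.

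\textbf{Your fallback.} It is viable and is a genuinely different per-user route, but as written it is only a sketch. With $\mathcal{B}=\mathcal{T}\cap[l-1]$ you would need two steps.
\begin{itemize}
\item First show $H(X_l|K_{\mathcal{B}},S_{\mathcal{B}})\geq n$, which follows from the proof of \eqref{eqth21}. This gives $H(K_l|K_{\mathcal{B}})\geq H(K_l,S_l|K_{\mathcal{B}},S_{\mathcal{B}},X_{[L]})$.
\item Then apply the identity of Lemma~\ref{lem2} with $\mathcal{T}_t$ replaced by $\mathcal{B}$; its proof only uses $l\notin\mathcal{B}$ and $|\mathcal{B}|\leq L-2$. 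This yields the lower bound $n+C_{|\mathcal{B}|+1}-C_{|\mathcal{B}|}=n-\Delta\geq n(1-\alpha)$.
\end{itemize}
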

\begin{proof}
    Fix $\mathcal{T} \subseteq [L]$ such that $|\mathcal{T}|\leq L - 2$, write its complement as $\mathcal{T}^c = (a_t)_{t \in [L-|\mathcal{T}|]}$, and define for $t \in [L-|\mathcal{T}|]$ $\mathcal{A}_t \triangleq (a_i)_{i\in[t]}$. Then, for $t=1$, we have
    \begin{align*}
    &H(K_{\mathcal{T}})\\
    & \stackrel{(a)} \geq H(K_{\mathcal{T}}) + |\mathcal{T}| n - H(X_{\mathcal{T}}) \\
    & \stackrel{(b)} = H(K_{\mathcal{T}}) + H(S_{\mathcal{T}}) - H(X_{\mathcal{T}})\\
    & \stackrel{(c)} = H(K_{\mathcal{T}} ,S_{\mathcal{T}}) - H(X_{\mathcal{T}})\\
    & \stackrel{(d)}= H(K_{\mathcal{T}} ,S_{\mathcal{T}},X_{\mathcal{T}}) - H(X_{\mathcal{T}})\\
    & = H(K_{\mathcal{T}} ,S_{\mathcal{T}}|X_{\mathcal{T}}) \numberthis \label{eqlem4}\\
    & \stackrel{(e)}\geq H(K_{\mathcal{T}} ,S_{\mathcal{T}}|K_{\mathcal{A}_t},S_{\mathcal{A}_t},X_{[L]}) \\
    & = H(K_{\mathcal{T}} ,S_{\mathcal{T}},S_{\mathcal{A}_t^c}|K_{\mathcal{A}_t},S_{\mathcal{A}_t},X_{[L]}) \\
&\phantom{--}- H(S_{\mathcal{A}_t^c}|K_{\mathcal{A}_t\cup \mathcal{T}},S_{\mathcal{A}_t \cup \mathcal{T}},X_{[L]}) \\
    & \geq H(S_{\mathcal{A}_t^c}|K_{\mathcal{A}_t},S_{\mathcal{A}_t},X_{[L]})  - H(S_{\mathcal{A}_t^c}|K_{\mathcal{A}_t\cup \mathcal{T}},S_{\mathcal{A}_t \cup \mathcal{T}},X_{[L]}) \\
    & \stackrel{(f)}= H(S_{\mathcal{A}_t^c}|K_{\mathcal{A}_t},S_{\mathcal{A}_t},X_{[L]})  \\
&\phantom{--}- H(S_{(\mathcal{A}_t \cup \mathcal{T})^c}|K_{\mathcal{A}_t\cup \mathcal{T}},S_{\mathcal{A}_t \cup \mathcal{T}},X_{[L]})\\
    &\stackrel{(g)}= n(L-t-1) -C_t -n(L-(t+|\mathcal{T}|)-1) + C_{t+|\mathcal{T}|} \\
    &\stackrel{(h)}\geq n |\mathcal{T}| (1 - \alpha), 
    \end{align*}
    where \begin{enumerate}[(a)]
    \item holds by Lemma \ref{lemcc}; \item holds by uniformity of the users' sequences; \item holds by independence between the users' sequences and the local randomness; \item holds because $X_{\mathcal{T}}$ can be recovered from $(K_{\mathcal{T}}, S_{\mathcal{T}})$;\item holds because conditioning reduces entropy; \item holds because $\mathcal{A}_t^c= (\mathcal{A}_t^c \cap \mathcal{T}^c)\cup(\mathcal{A}_t^c \cap \mathcal{T})$ and by the chain rule;
    \item holds by Lemma \ref{lem1}; \item holds because $C_{t+|\mathcal{T}|} - C_t = -\Delta|\mathcal{T}|$ by Property \ref{prop1}, and by Property \ref{prop3}.\end{enumerate} Then, for $\mathcal{T} \subseteq [L]$ such that $|\mathcal{T}|= L-1$, we have
    \begin{align*}
 H(K_{\mathcal{T}})
    & \stackrel{(a)} \geq H(K_{\mathcal{T}} ,S_{\mathcal{T}}|X_{\mathcal{T}}) \\
    & \stackrel{(b)}\geq H(S_{\mathcal{T}}|X_{[L]})\\
    & \stackrel{(c)} =  H(S_{\mathcal{T}} ,\Sigma_{[L]}|X_{[L]})\\
  & \stackrel{(d)}=  H(S_{[L]} |X_{[L]})\\
  & \stackrel{(e)} = n(L-1) - C_0 \\
  & \stackrel{(f)} \geq n (L-1) (1-\alpha ), \numberthis \label{lem42}
    \end{align*}
where \begin{enumerate}[(a)]
    \item holds similar to \eqref{eqlem4}; \item holds because conditioning reduces entropy; \item holds by \eqref{eqrec}; \item holds because $|\mathcal{T}|= L-1$; \item holds by Lemma \ref{lem1}; \item holds by Property \ref{prop3}. \end{enumerate}

    Finally, we have \begin{align*}    
    H(K_{[L]}) &\geq H(K_{[L-1]}) \\&\geq n (L-1) (1-\alpha ),\end{align*} where the last inequality holds by \eqref{lem42}.
\end{proof}
Using Lemma \ref{lem4}, we now characterize the   entropy of the global randomness given the local randomness of any strict subset of users.

\begin{lem} \label{lem6}
    For any $\mathcal{T}\subsetneq [L]$, we have 
\begin{align*}
    H(U|K_{\mathcal{T}}) 
    & =  n (L-|\mathcal{T}|-1) (1-\alpha ), \numberthis \label{secrets1}
\end{align*}
\end{lem}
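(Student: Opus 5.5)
Lemma~\ref{lem6} is stated inside the proof of Theorem~\ref{th4}, so we work in its setting: the protocol is optimal and leakage symmetry \eqref{eqLsym} holds. By Theorem~\ref{thmach}, optimality gives $H(U)\leq nR^{(U)}=n(1-\alpha)(L-1)$ and $H(K_l)\leq nR^{(K)}_l=n(1-\alpha)$ for every $l\in[L]$. The plan is to sandwich $H(U|K_{\mathcal{T}})$ between matching upper and lower bounds.

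\emph{Upper bound.} Since $K_{\mathcal{T}}$ is a function of $U$, we have
\begin{align*}
H(U|K_{\mathcal{T}}) = H(U) - H(K_{\mathcal{T}}) \leq n(1-\alpha)(L-1) - H(K_{\mathcal{T}}).
\end{align*}
For $\mathcal{T}\subsetneq[L]$, Lemma~\ref{lem4} gives $H(K_{\mathcal{T}})\geq n|\mathcal{T}|(1-\alpha)$. Substituting yields $H(U|K_{\mathcal{T}})\leq n(L-|\mathcal{T}|-1)(1-\alpha)$.

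\emph{Lower bound.} Pick any $j\notin\mathcal{T}$, which exists because $\mathcal{T}$ is a strict subset. Set $\mathcal{S}=[L]\setminus\{j\}$, so $\mathcal{T}\subseteq\mathcal{S}$ and $|\mathcal{S}|=L-1$. Since $K_{\mathcal{S}}$ is a function of $U$,
\begin{align*}
H(U|K_{\mathcal{T}}) \geq H(K_{\mathcal{S}}|K_{\mathcal{T}}) = H(K_{\mathcal{S}}) - H(K_{\mathcal{T}}).
\end{align*}
Lemma~\ref{lem4} gives $H(K_{\mathcal{S}})\geq n(L-1)(1-\alpha)$. Subadditivity and optimality give $H(K_{\mathcal{T}})\leq \sum_{l\in\mathcal{T}}H(K_l)\leq n|\mathcal{T}|(1-\alpha)$. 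Combining, $H(U|K_{\mathcal{T}})\geq n(L-1-|\mathcal{T}|)(1-\alpha)$, which matches the upper bound and proves \eqref{secrets1}.

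\emph{Main obstacle.} The delicate point is justifying $H(K_l)\leq n(1-\alpha)$ for each individual $l$. This uses the fact that optimality of the individual local-randomness rates forces $R^{(K)}_l=1-\alpha$, which relies on Theorem~\ref{thmi} and therefore on \eqref{eqLsym}. It also uses that the entropy of a variable supported on $\mathbb{F}_2^{nR^{(K)}_l}$ is at most $nR^{(K)}_l$.

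As a by-product, the argument gives $H(K_{\mathcal{T}})=n|\mathcal{T}|(1-\alpha)$ for every $\mathcal{T}\subsetneq[L]$, which is useful later in the proof of Theorem~\ref{th4}. The case $\mathcal{T}=\emptyset$ is consistent: it reads $H(U)=n(L-1)(1-\alpha)$, which follows from optimality together with \eqref{eqth23}.
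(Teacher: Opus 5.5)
Your proof is correct and follows the paper's argument. Both write $H(U|K_{\mathcal{T}})=H(U)-H(K_{\mathcal{T}})$, pin down $H(K_{\mathcal{T}})=n|\mathcal{T}|(1-\alpha)$ by sandwiching Lemma~\ref{lem4} against subadditivity with the optimal individual rates, and use the optimal global rate for $H(U)$. The only difference is cosmetic: the paper takes $H(U)=nR^{(U)}=n(L-1)(1-\alpha)$ directly from optimality and the uniformity of $U$, whereas your lower bound gets $H(U)\geq n(L-1)(1-\alpha)$ from $H(U)\geq H(K_{[L]\setminus\{j\}})$ and Lemma~\ref{lem4}, so you only need $H(U)\leq nR^{(U)}$.
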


\begin{proof}
For any $\mathcal{T}\subsetneq [L]$, we have 
\begin{align*}
    H(U|K_{\mathcal{T}}) 
    & \stackrel{(a)}= H(U) - H(K_{\mathcal{T}})\\
    & \stackrel{(b)}= n (L-1) (1-\alpha ) - H(K_{\mathcal{T}})\\
    & \stackrel{(c)}= n (L-1) (1-\alpha ) - n|\mathcal{T}| (1-\alpha)\\
    & =  n (L-|\mathcal{T}|-1) (1-\alpha ), 
\end{align*}
where \begin{enumerate}[(a)]
\item holds because $K_{\mathcal{T}}$ is a function of $U$;
    \item holds by optimality of the private-sum computation protocol; \item holds because $H(K_{\mathcal{T}}) \leq \sum_{t\in \mathcal{T}}H(K_{t}) = n|\mathcal{T}| (1-\alpha)$ by optimality of the private-sum computation protocol, and because $H(K_{\mathcal{T}}) \geq n|\mathcal{T}| (1-\alpha)$ by Lemma \ref{lem4}.\end{enumerate} 
  \end{proof}

    We now prove Theorem \ref{th4} using Lemma \ref{lem6} as follows. For any $\mathcal{T}\subsetneq [L]$, we have 
\begin{align*}
  \frac{I(U;K_{\mathcal{T}})}{H(U)} &= \frac{H(U)-H(U|K_{\mathcal{T}})}{H(U)}\\
&= \frac{n (L-1) (1-\alpha )-n (L-|\mathcal{T}|-1) (1-\alpha )}{n (L-1) (1-\alpha )}\\
& =  \frac{|\mathcal{T}|}{L-1}, \numberthis\label{eqf1}
\end{align*}
where the second equality holds by Lemma \ref{lem6} and optimality of the private-sum computation protocol. Additionally, we have 
 \begin{align*}
   \frac{I(U;K_{[L]})}{H(U)} 
   &= \frac{H(U)-H(U|K_{[L]})}{H(U)}\\  &= 1, \numberthis\label{eqf2} \end{align*}
where the second equality holds because, by Lemma \ref{lem6},  $$H(U|K_{[L]})\leq H(U|K_{[L-1]})=0.$$
Hence, by \eqref{eqf1} and \eqref{eqf2}, we have 
\begin{align*}
  \frac{I(U;K_{\mathcal{T}})}{H(U)} = \begin{cases}
0  & \text{if } |\mathcal{T}|=0 \\
 \frac{|\mathcal{T}|  }{L-1}    & \text{if } |\mathcal{T}| \in  \llbracket 1, L-1  \rrbracket   \\
  1 & \text{if } |\mathcal{T}|=L
 \end{cases},
\end{align*}
which, by \eqref{eqrampsec}, means that $(K_l)_{l\in [L]}$ are the shares of a $(L-1, L-1 ,L)$ ramp secret sharing scheme where $U$ is the secret.

\section{Concluding remarks} \label{sec:cl}
We have studied the problem of private summation over a one-way, public, and noiseless channel from distributed users. Our setting generalizes the problem of secure summation by allowing a controlled amount of information leakage for a given set of colluding users. For a fixed amount of information leakage, we have characterized the optimal amount of required communication, global randomness, and local randomness at the users.
Under a leakage symmetry condition, which ensures that  all users are indistinguishable in terms of their capacity to gain information
about other users' private sequences, we have also derived the optimal individual rates of local randomness for each user. Additionally, we have established a connection between secret sharing and private summation by showing that secret sharing is necessary to generate the local randomness at the users.

\appendices

\section{Proof of \eqref{eqprivact2}} \label{app1}
For any $\mathcal{T} \subset [L]$, we have
\begin{align*}
&I(S_{[L]}; X_{[L]} |\Sigma_{[L]} ,S_{\mathcal{T}}, K_{\mathcal{T}})\\
& \leq H(S_{[L]} |\Sigma_{[L]} ,S_{\mathcal{T}} ,K_{\mathcal{T}}) \\
& \stackrel{(a)}\leq H(S_{[L]} |\Sigma_{[L]} ) \\
& \stackrel{(b)}= H(S_L | S_{[L-1]} ,\Sigma_{[L]}) + \sum_{l\in [L-1]} H(S_{l} |S_{[l-1]} ,\Sigma_{[L]} ) \\
& \stackrel{(c)}=  \sum_{l\in [L-1]} H(S_{l} |S_{[l-1]}, \Sigma_{[L]} ) \\
& \stackrel{(d)}\leq \sum_{l\in [L-1]} H(S_{l}  ) \\
& \stackrel{(e)}= n(L-1),
\end{align*}
where
\begin{enumerate}[(a)]
    \item holds because conditioning reduces entropy;
    \item holds by the chain rule;
    \item holds because $S_L$ can be recovered from $(S_{[L-1]} ,\Sigma_{[L]})$;
    \item holds because conditioning reduces entropy;
    \item holds by Definition \ref{def1}.
\end{enumerate}
Hence, the left-hand side of~\eqref{eqprivact2} is always upper bounded by $n(L-1)$.

\section{Proof of Properties \ref{prop1}--\ref{prop3}} \label{App_properties}
\subsection{Proof of Property \ref{prop1}}

    Let $l \in \llbracket 0,L-1\rrbracket$. Let $\mathcal{T} \subseteq [L]$ such that  $|\mathcal{T}|=l$ and $\mathcal{T} = (t_i)_{i\in [l]}$. Then, we have
    \begin{align*}
       l \Delta = \sum_{j\in [l]} [\mathbb{L}((t_i)_{i\in [j-1]}) - \mathbb{L}((t_i)_{i\in [j]}) ] = \mathbb{L}(\emptyset) - \mathbb{L} (\mathcal{T}).
    \end{align*}
    Hence, for any $\mathcal{T} \subseteq [L-1]$ such that  $|\mathcal{T}|=l$, we have $C_l = \mathbb{L}(\emptyset)-  l \Delta$.

\subsection{Proof of Property \ref{prop2}}

   Since $C_{L-1} =0$, we have
    \begin{align}
    (L-1) \Delta =    \sum_{l\in[L-1]} (C_{l-1} -C_l) = C_0 - C_{L-1} = C_0 = \mathbb{L}(\emptyset) \geq 0, \label{eqDeltavalue}
    \end{align}
    hence, for $l \in \llbracket 0,L-2\rrbracket$, $  0\leq \Delta = C_l - C_{l+1}$. We conclude that the property holds by noting that $C_{L-1} =0 = C_L$.

\subsection{Proof of Property \ref{prop3}}

    By \eqref{eqDeltavalue}, we have
    \begin{align*}
        \Delta = \frac{C_0}{L-1} \leq n\alpha .
    \end{align*}

\section{Proof of Lemma \ref{lem1}}\label{applem1}
For any $\mathcal{T} \subseteq [L]$, we have
\begin{align*}
    C_{|\mathcal{T}|} 
   & \stackrel{(a)}=I(S_{[L]}; X_{[L]} |\Sigma_{[L]} ,S_{\mathcal{T}} ,K_{\mathcal{T}}) \\
   & = I(S_{[L]}; X_{[L]} ,K_{\mathcal{T}}|\Sigma_{[L]} ,S_{\mathcal{T}} ) - I(S_{[L]};  K_{\mathcal{T}}|\Sigma_{[L]} ,S_{\mathcal{T}} )  \\
& \stackrel{(b)} = I(S_{[L]}; X_{[L]} ,K_{\mathcal{T}}|\Sigma_{[L]} ,S_{\mathcal{T}} )   \\
& = H(S_{[L]}|\Sigma_{[L]}, S_{\mathcal{T}} ) - H(S_{[L]}| X_{[L]} ,K_{\mathcal{T}} ,\Sigma_{[L]} ,S_{\mathcal{T}} )   \\
& \stackrel{(c)}= H(S_{\mathcal{T}^c}|\Sigma_{\mathcal{T}^c} ,S_{\mathcal{T}} ) - H(S_{\mathcal{T}^c}| X_{[L]} ,K_{\mathcal{T}}  , \Sigma_{[L]} ,S_{\mathcal{T}} )  \\
& \stackrel{(d)}= H(S_{\mathcal{T}^c}|\Sigma_{\mathcal{T}^c} ,S_{\mathcal{T}} ) - H(S_{\mathcal{T}^c}| X_{[L]} ,K_{\mathcal{T}}  , S_{\mathcal{T}} )  \\
&  \stackrel{(e)}= H(S_{\mathcal{T}^c}|\Sigma_{\mathcal{T}^c}  ) - H(S_{\mathcal{T}^c}| X_{[L]} ,K_{\mathcal{T}}   ,S_{\mathcal{T}} )  \\
& \stackrel{(f)}= H(S_{\mathcal{T}^c}) - H(\Sigma_{\mathcal{T}^c}  ) - H(S_{\mathcal{T}^c}| X_{[L]} ,K_{\mathcal{T}}   ,S_{\mathcal{T}} )  \\
& \stackrel{(g)}= n(L- |\mathcal{T}|) - n\mathds{1}\{ \mathcal{T}^c \! \neq \emptyset\} - H(S_{\mathcal{T}^c}| X_{[L]}, K_{\mathcal{T}}   ,S_{\mathcal{T}} ) ,
\end{align*}
where \begin{enumerate}[(a)]
    \item holds by Property \ref{prop1}; \item  holds because \begin{align*}
        I(S_{[L]};  K_{\mathcal{T}}|\Sigma_{[L]} ,S_{\mathcal{T}} )& \leq I(S_{[L]},\Sigma_{[L]} ,S_{\mathcal{T}};  K_{\mathcal{T}})\\
        &=0,\end{align*} where the equality holds by independence between the users's sequences and local randomness; \item holds by the chain rule and because there is a one-to-one mapping between $(\Sigma_{[L]} ,S_{\mathcal{T}})$ and $(\Sigma_{\mathcal{T}^c}, S_{\mathcal{T}})$;\item holds because $\Sigma_{[L]}$ can be reconstructed from $X_{[L]}$ by \eqref{eqrec}; \item holds by independence between $(S_{\mathcal{T}^c},\Sigma_{\mathcal{T}^c} )$ and $S_{\mathcal{T}}$; \item holds because $H(S_{\mathcal{T}^c},\Sigma_{\mathcal{T}^c}  ) = H(S_{\mathcal{T}^c})$;
            \item holds by uniformity of the users' sequences.
    \end{enumerate}

\bibliographystyle{IEEEtran}
\bibliography{bib}

\end{document}